\documentclass[submission,copyright,creativecommons]{eptcs}
 % Name of the event you are submitting to

%approximate string matching
%lossless seeds
%symbolic dynamics
%subshifts
%combinatorics on words

\usepackage[utf8]{inputenc}
\usepackage{amsmath,amsthm,amssymb,amsfonts} 
\usepackage{enumitem} 
\usepackage{hyperref}
\usepackage{color}
\usepackage{subcaption}
\usepackage[english]{babel}

\usepackage{tikz}
\usepackage{tkz-graph}

\DeclareMathAlphabet{\mathcal}{OMS}{cmsy}{m}{n}

\title{Languages of lossless seeds}

\author{Karel Břinda
\institute{Laboratoire d’Informatique Gaspard Monge\\Université Paris-Est Marne-la-Vallée \\ Paris, France}
\email{karel.brinda@univ-mlv.fr}
}

\newcommand{\J}{\ensuremath{\texttt{-}}}
\newcommand{\F}{\ensuremath{\texttt{\#}}}
\newcommand{\setmin}{\backslash}
\newcommand{\uu}{\mathbf{u}}
\newcommand{\vv}{\mathbf{v}}
\newcommand{\ww}{\mathbf{w}}

\newcommand\A{\mathcal{A}}
\newcommand\Z{\mathbb{Z}}
\newcommand{\N}{\mathbb{N}}
\newcommand{\LL}{\mathcal{L}}

\newcommand{\sh}{\mathrm{sh}}

\newcommand{\seedset}[2]{\mathrm{Seed\,}^{#1}_{#2}}
\newcommand{\lkvalid}[2]{\mathrm{V\,}^{#1}_{#2}}
\newcommand{\compatible}[2]{\mathrm{C\,}^{#1}_{#2}}
\newcommand{\shift}{\sigma}
\newcommand{\emptyword}{\varepsilon}

\newcommand{\OO}{\mathcal{O}}

\newcommand{\seedOR}{{\oplus}}

\newcommand{\diag}[1]{\colorbox{black}{\color{white}{#1}}}
\newcommand{\alertPD}[1]{\colorbox{yellow}{#1}}
\newcommand{\alertPDb}[1]{\colorbox{red}{#1}}

\newtheorem{definition}{Definition}
\newtheorem{proposition}{Proposition}
\newtheorem{theorem}{Theorem}
\newtheorem{observation}{Observation}
\newtheorem{lemma}{Lemma}
\newtheorem{example}{Example}
\newtheorem{corollary}{Corollary}

\begin{document}
\maketitle

%%%%%%%%%%%%%%%%%%%%%%%%%%%%%%%%%%%%%%%%%%%%%%%%%%%%%%%%%%%%%%%%%%%%%%%%%%%%%%%%%%%
%%%%%%%%%%%%%%%%%%%%%%%%%%%%%%%%%%%%%%%%%%%%%%%%%%%%%%%%%%%%%%%%%%%%%%%%%%%%%%%%%%%
%%%%%%%%%%%%%%%%%%%%%%%%%%%%%%%%%%%%%%%%%%%%%%%%%%%%%%%%%%%%%%%%%%%%%%%%%%%%%%%%%%%

\begin{abstract}
Several algorithms for similarity search employ seeding techniques 
to quickly discard very dissimilar regions. In this paper, we study 
theoretical properties of lossless seeds, i.e., spaced seeds
having full sensitivity.
We prove that lossless seeds coincide with languages of certain 
sofic subshifts, hence they can be recognized by finite automata. 
Moreover, we show that these subshifts are fully given by the 
number of allowed errors $k$ and the seed margin $\ell$.
We also show that for a fixed $k$, optimal seeds must 
asymptotically satisfy $\ell \in \Theta(m^{\frac{k}{k+1}})$.
\end{abstract}

%%%%%%%%%%%%%%%%%%%%%%%%%%%%%%%%%%%%%%%%%%%%%%%%%%%%%%%%%%%%%%%%%%%%%%%%%%%%%%%%%%%
%%%%%%%%%%%%%%%%%%%%%%%%%%%%%%%%%%%%%%%%%%%%%%%%%%%%%%%%%%%%%%%%%%%%%%%%%%%%%%%%%%%
%%%%%%%%%%%%%%%%%%%%%%%%%%%%%%%%%%%%%%%%%%%%%%%%%%%%%%%%%%%%%%%%%%%%%%%%%%%%%%%%%%%

\section{Introduction}

The annual volume of data produced by the Next-Generation Sequencing technologies has been rapidly increasing;
even faster than growth of disk storage capacities.
Thus, new efficient algorithms and data-structures for processing, compressing and storing these data, are needed.

Similarity search represents the most frequent operation in bioinformatics. In huge DNA databases,
a two-phase scheme is the most widely used approach
to find all occurrences of a given string up to some Hamming or Levenshtein distance. 
First of all, most of dissimilar regions are discarded in
a fast \emph{filtration phase}.
Then, in a \emph{verification phase}, only ``hot candidates'' on similarity are processed by classical time-consuming algorithms like
Smith-Waterman \cite{smith-watermann} or
Needleman-Wunsch \cite{needleman-wunsch}.

Algorithms for the filtration phase are often based on so-called \emph{seed filters} which make use of the fact that two strings of the same length $m$ being in Hamming distance $k$ must necessarily share some exact patterns. 
These patterns are represented as strings over the alphabet $\{\F,\J\}$ called \emph{seeds}, where the ``matching'' symbol $\F$ corresponds to a matching position and the ``joker'' symbol $\J$ to a matching or a mismatching position. 

For instance, for two strings of length $15$, matching within two errors, shared patterns are, e.g.,
$\F\F\J\F\J\J\F\F\J\F$
or
$\F\F\F\F\F$.
For illustration, if we consider that two strings match as \texttt{===X=====X=====}
(where the symbols
\texttt{=} and \texttt{X}
represent respectively matching and mismatching positions), then the corresponding seed positions can be following:
\begin{center}
{
\texttt{===X=====X=====} \\
\texttt{.\F\F\J\F\J\J\F\F\J\F....}\\
\texttt{....\F\F\F\F\F......}
}
\end{center}
As the second seed is the longest possible contiguous seed in this 
case, we observe the main advantage of spaced seeds in comparison 
to contiguous seeds: for the same task, there exist spaced seeds 
with higher number of $\F$'s (so-called \emph{weight}).

Two basic characteristics of every seed are \emph{selectivity} and \emph{sensitivity}.
Selectivity measures
restrictivity of a filter created from the seed.
In general, higher weight
implies better selectivity of the filter. 
\emph{Lossless seeds} are those seeds having full sensitivity. They 
are easier to handle mathematically on one hand, but attain lower 
weight on the other hand.
\emph{Lossy seeds} are employed for practical purposes more since a small decrease in sensitivity can be compensated by considerable improvement of selectivity.

Nevertheless, only lossless seeds are considered in this paper.
For a given 
length $m$ of strings to be compared and a given number of allowed 
mismatches $k$ (such setting is called \emph{$(m,k)$-problem}), the 
aim is to design fully sensitive seeds with highest possible weight.

\subsection{Literature}

The idea of lossless seeds was originally introduced by Burkhardt 
and Karkkäinen~\cite{qgrampre,qgram}.
Let us remark that lossy spaced seed were used in the same 
time in the PatternHunter program \cite{patternhunter}.
Generalization of lossless seeds was studied by Kucherov 
et~al.~\cite{multiseed}.
%The authors studied seed families and the case when more hits of a 
given seed are required (the pattern is shared at more positions).
The authors also proved that, for a fixed number $k$ of mismatches, 
\emph{optimal seeds} (i.e., seeds with the highest possible weight 
among all seeds solving the given problem) must asymptotically satisfy
$m - w(m) \in \Theta(m^{\frac{k}{k+1}})$, where $w(m)$ denotes the 
maximal possible weight of a seed solving the $(m,k)$-problem.
They also started a systematic study of seeds created by repeating 
of short patterns.
Afterwards, the results on asymptotic properties of optimal seeds 
were generalized by
Farach-Colton et~al.~\cite{optimal_seeds_asymptotic}.
Computational complexity of optimal seed construction was derived by Nicolas and Rivals~\cite{hardnesspre,hardness}.

Further, the theory on lossless seed was significantly developed by Egidi and Manzini.
First, they studied seeds designed from mathematical objects called 
perfect rulers~\cite{perfect_rulers,perfect_rulers2}. The idea of 
utilization of some type of ``rulers'' was later independently 
extended
by KB~\cite{diplomka} (cyclic rulers) and, again, Edigi and Manzini~
\cite{periodic_seeds} (difference sets). In~\cite{periodic_seeds}, 
these ideas were extended also to seed families. Cyclic rulers and 
difference sets mathematically correspond to each other.
Edigi and Manzini~\cite{quadratic_residues} also showed possible 
usage of number-theoretical results on quadratic residues for seed 
design.

In practice, seeds often find their use in short-read mappers 
implementing hash tables (for more details on read mapping, see, 
e.g.,~\cite{Li2010a,Ribeca2012}). ZOOM~\cite{zoom} and
PerM~\cite{perm} are examples of mappers utilizing lossless seeds.

A list of papers on spaced seed is regularly maintained by Noé~\cite{seed_bibliography}.

\subsection{Our object of study}

One of the most important theoretical aspects of lossless seeds are their structural properties. 
Whereas good lossy seeds usually show irregularity,
it was observed that good lossless seeds are often repetitions of short patterns 
(\cite{multiseed,perm,diplomka,periodic_seeds}).
The question whether optimal seeds can be constructed in all cases by repeating patterns, which would be short with respect to seed length, remains open
(see~\cite[Conjecture~1]{diplomka}). Its answering would have  
practical impacts in development of bioinformatical software tools 
since the search space of programs for lossless seeds design could 
be significantly cut and also indexes in programs using lossless 
seeds for approximate string matching could be more memory 
efficient (like \cite{perm}).

\subsection{Paper organization and results}

In this paper, we follow and further develop ideas from~\cite{diplomka}. We concentrate on a parameter~$\ell$ called seed margin,
which is the difference between
the size~$m$ of compared strings and the length of a seed.

In Section~\ref{sec_preliminaries} we recall the notation used in combinatorics on words and symbolic dynamics.
In Section~\ref{sec_lossless_seeds} we formally define seeds and $(m,k)$-problems. Then we 
transform the problem of seed detection into another criterion (Theorem~\ref{thm_detection}) and also show asymptotic properties of $\ell$ for optimal seeds (Proposition~\ref{prop_asymptotic}).
In Section~\ref{sec_seed_subshifts} we prove that sets of seeds, obtained by fixing the parameters $k$ and $\ell$,
coincide with languages of some sofic subshifts. Therefore, those sets of seeds are recognized by finite automata. In Section~\ref{sec_application} we show applications of obtained results for seed design. These results provide a new view on lossless seeds and explain their periodic properties.

%%%%%%%%%%%%%%%%%%%%%%%%%%%%%%%%%%%%%%%%%%%%%%%%%%%%%%%%%%%%%%%%%%%%%%%%%%%%%%%%%%%
%%%%%%%%%%%%%%%%%%%%%%%%%%%%%%%%%%%%%%%%%%%%%%%%%%%%%%%%%%%%%%%%%%%%%%%%%%%%%%%%%%%
%%%%%%%%%%%%%%%%%%%%%%%%%%%%%%%%%%%%%%%%%%%%%%%%%%%%%%%%%%%%%%%%%%%%%%%%%%%%%%%%%%%

\section{Preliminaries}\label{sec_preliminaries}

Throughout the paper, we use a standard notation of combinatorics on words and symbolic dynamics.

\subsection{Combinatorics on words}

An \emph{alphabet} $\A=\{a_0,\ldots,a_{m-1}\}$ is a~finite set of symbols called \emph{letters}. In this paper, we will work exclusively with the alphabet $\{\F,\J\}$
A finite sequence of letters from $\A$ is called
a~\emph{finite word} (over $\A$).
The set $\A^{*}$ of all finite words (including the empty word~$\emptyword$)
provided with the operation of concatenation is a~free monoid. The concatenation is denoted multiplicatively.
If $w = w_0w_1\cdots w_{n-1}$ is a finite word over $\A$, we denote its 
length by $|w| = n$.
We deal also with
bi-infinite sequences
of letters from $\A$
called
{\em bi-infinite words} $\ww =\cdots \ww_{-2} \ww_{-1}| \ww_0 \ww_1 \ww_2\cdots$ over $\A$. The sets of all bi-infinite words over~$\A$ is denoted by $\A^{\Z}$.
%A general term \emph{word} will refer to a word being finite, left-side infinite, right-side infinite, or bi-infinite.

A~finite word $w$ is called a~\emph{factor} of a word $\uu$ ($\uu$ being finite or bi-infinite) if there exist words $p$ and~$s$ (finite or one-side infinite) such that
$\uu=pws$.
For given indexes $i$ and $j$, the symbol $\uu[i,j]$ denotes the factor $\uu_i\uu_{i+1}\cdots\uu_j$ if $i\leq j$, or $\emptyword$ if $i>j$.
A~concatenation of $k$ words $w$
is denoted by $w^k$.
The set of all factors of a word $\uu$ ($\uu$ being finite or bi-infinite) is called the language of $\uu$ and denoted by $\LL(\uu)$. Its subset $\LL(\uu)\cap\A^n$ containing all factors of $\uu$ of length~$n$ is denoted by $\LL_n(\uu)$.

Let us remark that this notation will be used extensively in the whole text. For instance $\ww[2,5]\J^4$ denotes the word created by concatenation of the factor $\ww_2 \ww_3 \ww_4 \ww_5$ of a bi-infinite word $\ww$ and the word $\J\J\J\J$.
Similarly, for a finite word $v$ of length $n$, by $\cdots\J\J|v\J\J\cdots$ we denote the bi-infinite word $\uu$ such that for all
$i\in\{0,\ldots,n-1\}(\uu_i=w_i)$ and for all
$i\in\Z\setminus\{0,\ldots,n-1\}(\uu_i=\J)$.
For more information about combinatorics on words, we can refer to Lothaire~I~\cite{Lothaire1997}.

\subsection{Symbolic dynamics}\label{subsec_symdyn}

Consider an alphabet $\A$. On the set $\A^{\Z}$ of bi-infinite words over $\A$, we define a so-called Cantor metric~$d$ as
\begin{equation*}
	d(\uu,\vv) =
	\begin{cases}
		0 & \text{if } \uu=\vv, \\
		2^{-s} &  \text{if } \uu\not=\vv, \text{ where } s := \min\left\{|i|\ \big|\ \uu_i\not=\vv_i \right\}.
	\end{cases}
\end{equation*}
We define a \emph{shift} operation $\shift$ as
$[\shift(\uu)]_i = \uu_{i+1}$
for all $i\in\Z$.
The map $\sigma$ is invertible, and the power~$\sigma^k$ is defined by composition for all $k\in\Z$. The map $\sigma$ is continuous on $\A^\Z$, therefore, $(\A^\Z,\sigma)$ is a dynamical system, which is called a~\emph{full shift}.

A bi-infinite word $\uu\in\A^\Z$ \emph{avoids} a set of finite words $X$ if $\LL(\uu)\cap X=\emptyset$.
By $S_X$ we denote the set of all bi-infinite words that avoid $X$ and we call it a~\emph{subshift}. If $X$ is a regular language, $S_X$ is called \emph{sofic subshift}; if $X$ is finite, $S_X$ is called a \emph{subshift of finite type}.
The \emph{language} $\LL(S)$ of a subshift $S$ is the union of languages of all bi-infinite words from $S$.
By $\LL_n(S)$ we denote the set $\LL(S)\cap\A^{n}$.
It holds that a set $S \subseteq \A^{\Z}$ is a subshift if and only if
it is invariant under the shift map $\shift$ (that means $\shift(S)=S$)
and it is closed with respect to the Cantor metric.
A general theory of subshifts is well summarized in~\cite{Lind1995}.

%%%%%%%%%%%%%%%%%%%%%%%%%%%%%%%%%%%%%%%%%%%%%%%%%%%%%%%%%%%%%%%%%%%%%%%%%%%%%%%%%%%
%%%%%%%%%%%%%%%%%%%%%%%%%%%%%%%%%%%%%%%%%%%%%%%%%%%%%%%%%%%%%%%%%%%%%%%%%%%%%%%%%%%
%%%%%%%%%%%%%%%%%%%%%%%%%%%%%%%%%%%%%%%%%%%%%%%%%%%%%%%%%%%%%%%%%%%%%%%%%%%%%%%%%%%

\section{Lossless seeds}\label{sec_lossless_seeds}

In this section, we introduce basic definition formalizing lossless seeds. Then we introduce a parameter~$\ell$ called seed margin and show its asymptotic properties for optimal seeds.
Let us recall that $m$ denotes the length of strings to be compared and $k$ denotes the number of allowed mismatches.

\begin{definition}
The binary alphabet~$\A=\{\F,\J\}$ is called \emph{seed alphabet}. Every finite word over this alphabet is a \emph{seed}.
The \emph{weight} of a seed~$Q$ is the number of occurrences of the letter~$\F$ in~$Q$.
\end{definition}

\begin{definition}\label{def_(m,k)-problem}
	Let $m$ and $k$ be positive integers. Every set
	$\{i_1, \ldots, i_k\} \subseteq \{0,\ldots,m-1\}$
	is called \emph{error combination} of $k$ errors.
	
	Consider a seed $Q$ such that $|Q|<m$ and denote
	$\ell := m - |Q|$, which is the so-called
	\emph{seed margin}.
	Then $Q$
	\emph{detects} an error combination
	$\{i_1, \ldots, i_k\}\subseteq \{0,\ldots,m-1\}$ at position $t\in\{0,\ldots,\ell\}$ if 
	for all
	$j\in\{0,\ldots,|Q|-1\}$
	it holds
	$
		\left(
		Q_j = \F \implies j+t \not\in\{i_1,\ldots,i_k\}
		\right).
	$

	The seed $Q$ is said to \emph{solve} the $(m,k)$-problem if every error combination
	$\{i_1,\ldots,i_k\}\subseteq \{0,\ldots,m-1\}$	
	of $k$ errors is detected by $Q$ at some position $t\in\{0,\ldots,\ell\}$.
\end{definition}
Many combinatorial properties of seeds can be
studied from the perspective of bi-infinite words.
First, we need a seed analogy of the logical function \texttt{OR} applied on bi-infinite words and producing, again, a bi-infinite word.
\begin{definition}
	Consider $k$ bi-infinite words	$\uu^{(1)},\ldots,\uu^{(k)}$ over $\A$.
	We define a $k$-nary operation $\seedOR$ as
	\[\forall i\in\Z:
	\quad
		(\seedOR(\uu^{(1)},\ldots,\uu^{(k)}))_i =
		\begin{cases}
			\F & \text{if } (\uu^{(j)})_i = \F \enspace \text{ for some }
			j \in\{1,\ldots,k\}\\
			\J & \text{otherwise}
		\end{cases}
	\]
\end{definition}

%We can use the operations $\shift$ and $\seedOR$ to decide if a specific error combination is detected by a seed at a given position, or not.
The following theorem
will be crucial for seed analysis in the rest of the text.
It is mainly a translation of basic definitions to the
formalism of shifts and logical operations, but it enables us
to easily observe on which parameters (and how)
the structure of lossless seeds really depends.

\begin{theorem}\label{thm_detection}
	Let $m$ and $k$ be positive integers and $Q$ be a seed
	such that $|Q|<m$. Denote $\ell := m - |Q|$ and $\ww := \cdots\J\J|\J^\ell Q \J\J \cdots$.
	Then $Q$ detects an error combination
	$\{i_1,\ldots ,i_k\}\subseteq\{0,\ldots,m-1\}$
	at a position $t\in\{0,\ldots,\ell\}$ if and only if
	\begin{equation}\label{eq_basic_thm}
			\Bigl(\seedOR(\shift^{i_1}(\ww), \ldots, \shift^{i_k}(\ww)\Bigr)_{\ell-t} = \J.
	\end{equation}
\end{theorem}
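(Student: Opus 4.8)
The plan is to prove both implications by carefully unwinding the definitions on both sides and tracking index arithmetic. The core observation is that $\ww = \cdots\J\J|\J^\ell Q\J\J\cdots$ places $Q$ so that $Q_j$ sits at position $\ell+j$ of $\ww$, i.e. $\ww_{\ell+j} = Q_j$ for $j\in\{0,\ldots,|Q|-1\}$, and $\ww_p = \J$ for all other $p$. Consequently, for any shift $s$, $[\shift^s(\ww)]_i = \ww_{i+s}$ equals $\F$ precisely when $i+s = \ell+j$ for some $j\in\{0,\ldots,|Q|-1\}$ with $Q_j = \F$; equivalently, when $Q_{i+s-\ell} = \F$ (interpreting $Q$ as undefined, hence not $\F$, outside its index range).

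First I would fix the index under examination: evaluate the left side of \eqref{eq_basic_thm} at coordinate $i = \ell-t$. By the definition of $\seedOR$, we have $\bigl(\seedOR(\shift^{i_1}(\ww),\ldots,\shift^{i_k}(\ww))\bigr)_{\ell-t} = \J$ if and only if $[\shift^{i_r}(\ww)]_{\ell-t} = \J$ for every $r\in\{1,\ldots,k\}$, i.e. $\ww_{\ell-t+i_r} \neq \F$ for all $r$. Using the description of $\ww$ above, $\ww_{\ell-t+i_r} = \F$ would mean $\ell - t + i_r = \ell + j$ for some valid index $j$ with $Q_j = \F$, that is $j = i_r - t$ with $j\in\{0,\ldots,|Q|-1\}$ and $Q_j = \F$. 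So the right-hand condition "$\ww_{\ell-t+i_r}\neq\F$ for all $r$" says exactly: for every $r$, either $i_r - t \notin \{0,\ldots,|Q|-1\}$, or $Q_{i_r-t}$ is not $\F$.

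Next I would match this against the detection condition from Definition~\ref{def_(m,k)-problem}. By definition, $Q$ detects $\{i_1,\ldots,i_k\}$ at position $t$ iff for all $j\in\{0,\ldots,|Q|-1\}$ we have $Q_j = \F \implies j+t\notin\{i_1,\ldots,i_k\}$. Contrapositively and reindexing via $i_r = j+t$, this is equivalent to: for every $r$ such that $i_r - t \in \{0,\ldots,|Q|-1\}$, we have $Q_{i_r-t} \neq \F$. This is verbatim the condition derived in the previous paragraph, so the two are equivalent, establishing the theorem. I should also note that the ranges line up consistently: $t\in\{0,\ldots,\ell\}$ forces $\ell-t\in\{0,\ldots,\ell\}$, so the coordinate $\ell-t$ is the natural one to probe, and the factor $\J^\ell$ to the left of $Q$ is exactly what makes every position $t$ in the admissible range correspond to a well-defined coordinate.

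The main obstacle is purely bookkeeping: keeping the three reindexings straight — the placement offset $\ell$ of $Q$ inside $\ww$, the shift amount $i_r$, and the probe coordinate $\ell - t$ — and being careful that "$Q_j = \F$" is read as false whenever $j$ is out of range, so that the equivalence is clean on both sides. There is no deep idea here; the content of the theorem, as the authors themselves remark, is that it repackages the combinatorial detection condition into the language of shifts and $\seedOR$, and the proof is just the verification that the repackaging is faithful. I would present it as a short chain of "if and only if" steps anchored at coordinate $\ell-t$.
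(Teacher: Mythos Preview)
Your proposal is correct and follows essentially the same approach as the paper's proof: both unwind the detection definition to the condition $\ww_{i_r - t + \ell} = \J$ for all $r$ and identify this with the $\seedOR$ expression at coordinate $\ell - t$. The paper's argument is just a terser three-line version of your careful index bookkeeping.
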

	
\begin{proof}
	$Q$ detects $\{i_1,\ldots,i_k\}$ at position $t$ if 
	$
		\forall j\in\{0,\ldots,|Q|-1\} \bigl(
			Q_j = \F
			\implies
			j+t \not\in \{i_1,\ldots,i_k\}
			\bigr).
	$
	This is equivalent to $\forall p \in\{i_1,\ldots,i_k\}(\ww_{p - t + \ell}=\J)$,
	which is equivalent to~\eqref{eq_basic_thm}.
\end{proof}

\begin{corollary}\label{cor_basic_thm}
	$Q$ does not detect a combination %an error combination
	$\{i_1,\ldots ,i_k\}$
	at any position $t\in\{0,\ldots,\ell\}$	
	if and only if
	%$\iff$ 
	%\begin{equation}\label{eq_thm1_crit_2}
	$
		(\seedOR(\shift^{i_1}(\ww), \ldots, \shift^{i_k}(\ww))
		[0,\ell] = \F^{\ell+1}.
	$
	%\end{equation}
\end{corollary}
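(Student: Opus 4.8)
The plan is to derive the corollary directly from Theorem~\ref{thm_detection} by negating the quantifier over positions. By definition, $Q$ solves nothing special here; we merely need that ``$Q$ detects $\{i_1,\ldots,i_k\}$ at \emph{some} position $t\in\{0,\ldots,\ell\}$'' fails. Negating the existential over $t$, this means that for \emph{every} $t\in\{0,\ldots,\ell\}$, $Q$ does \emph{not} detect $\{i_1,\ldots,i_k\}$ at $t$. Applying Theorem~\ref{thm_detection} to each such $t$ (in its contrapositive form), non-detection at $t$ is equivalent to $\bigl(\seedOR(\shift^{i_1}(\ww),\ldots,\shift^{i_k}(\ww))\bigr)_{\ell-t} = \F$.

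The second step is a change of index. As $t$ ranges over $\{0,\ldots,\ell\}$, the quantity $\ell - t$ ranges over $\{0,\ldots,\ell\}$ as well (it is just the reversal of the range). Hence ``for all $t\in\{0,\ldots,\ell\}$, the $(\ell-t)$-th coordinate of $\seedOR(\shift^{i_1}(\ww),\ldots,\shift^{i_k}(\ww))$ equals $\F$'' is the same as ``for all $r\in\{0,\ldots,\ell\}$, the $r$-th coordinate equals $\F$'', which is precisely the statement that the factor $\bigl(\seedOR(\shift^{i_1}(\ww),\ldots,\shift^{i_k}(\ww))\bigr)[0,\ell]$ equals $\F^{\ell+1}$. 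Writing these two equivalences down in sequence yields the claim.

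I do not anticipate a genuine obstacle: the corollary is a routine logical dual of the theorem, and the only thing to be careful about is the harmless index substitution $r = \ell - t$ and the fact that ``not detected at any position'' is the De Morgan dual of ``detected at some position''. The proof will be two or three lines long, essentially: negate the existential quantifier over $t$, invoke Theorem~\ref{thm_detection} pointwise, and observe that the resulting conjunction over all coordinates $\ell-t$ (equivalently, all coordinates in $\{0,\ldots,\ell\}$) is exactly the assertion that the length-$(\ell+1)$ prefix window $[0,\ell]$ of the $\seedOR$ is the all-$\F$ word $\F^{\ell+1}$.
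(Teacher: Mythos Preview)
Your proposal is correct and matches the paper's approach: the paper states this as an immediate corollary of Theorem~\ref{thm_detection} without writing out a proof, and the argument you give---negate the existential over $t$, apply the theorem pointwise, and reindex via $r=\ell-t$---is exactly the intended (and only reasonable) derivation.
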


Let us mention that in the case of two errors,
Corollary~\ref{cor_basic_thm} corresponds to the 
Laser method~\cite[Section~4.1]{diplomka} (a JavaScript 
implementation is available at \cite{laser_web}) as we illustrate 
in the following example.

\begin{example}\label{ex_laser}
Consider a seed $Q = \F\F\J\F\J\J \J \J\J\F\J\F\F$ of length $14$
and the $(19,2)$-problem.
In Figure~\ref{fig_ex_k2} we show a corresponding schematic table. Denote $\ww := \cdots\J\J|\J^5 Q \J\J\cdots$.
The words 
$\seedOR(\shift_i(\ww),\shift_j(\ww))$ occur diagonally.
It is easily seen from Corollary~\ref{cor_basic_thm} that
$Q$ does not detect the error combination
$\{5,13\}$
since $\ell=5$ and $\bigl(\seedOR(\shift^5(\ww),\shift^{13}(\ww))\bigr)[0,5]=\F^{\ell+1}$.	
\end{example}

\begin{figure}[t] \centering
\begin{tabular}{p{0.12cm}p{0.12cm}||p{0.12cm}p{0.12cm}p{0.12cm}p{0.12cm}p{0.12cm}|p{0.12cm}p{0.12cm}p{0.12cm}p{0.12cm}p{0.12cm}p{0.12cm}p{0.12cm}p{0.12cm}p{0.12cm}p{0.12cm}p{0.12cm}p{0.12cm}p{0.12cm}p{0.12cm}|p{0.12cm}p{0.12cm}p{0.12cm}p{0.12cm}p{0.12cm}}
~&&$0$&$1$&$2$&$3$&$4$&$5$&$6$&$7$&$8$&$9$&$10$&$11$&$12$&$13$&$14$&$15$&$16$&$17$&$18$&&&&&\\ 
&&&&&&&\F&\F&\J&\F&\J&\J&\J&\J&\J&\J&\F&\J&\F&\F&&&&&\\ \hline\hline
$0$&&\diag{\J}&{\J}&{\J}&{\J}&{\J}&{\F}&{\F}&{\J}&{\F}&{\J}&{\J}&{\J}&{\J}&{\J}&{\J}&{\F}&{\J}&{\F}&{\F}&{\J}&{\J}&{\J}&{\J}&{\J}\\
$1$&&{\J}&\diag{\J}&{\J}&{\J}&{\J}&{\F}&{\F}&{\J}&{\F}&{\J}&{\J}&{\J}&{\J}&{\J}&{\J}&{\F}&{\J}&{\F}&{\F}&{\J}&{\J}&{\J}&{\J}&{\J}\\
$2$&&{\J}&{\J}&\diag{\J}&{\J}&{\J}&{\F}&{\F}&{\J}&{\F}&{\J}&{\J}&{\J}&{\J}&{\J}&{\J}&{\F}&{\J}&{\F}&{\F}&{\J}&{\J}&{\J}&{\J}&{\J}\\
$3$&&{\J}&{\J}&{\J}&\diag{\J}&{\J}&{\F}&{\F}&{\J}&{\F}&{\J}&{\J}&{\J}&{\J}&{\J}&{\J}&{\F}&{\J}&{\F}&{\F}&{\J}&{\J}&{\J}&{\J}&{\J}\\
$4$&&{\J}&{\J}&{\J}&{\J}&\diag{\J}&{\F}&{\F}&{\J}&{\F}&{\J}&{\J}&{\J}&{\J}&{\J}&{\J}&{\F}&{\J}&{\F}&{\F}&{\J}&{\J}&{\J}&{\J}&{\J}\\\hline
$5$&\F&{\F}&{\F}&{\F}&{\F}&{\F}&\diag{\F}&{\F}&{\F}&{\F}&{\F}&{\F}&{\F}&{\F}&{\F}&{\F}&{\F}&{\F}&{\F}&{\F}&{\F}&{\F}&{\F}&{\F}&{\F}\\
$6$&\F&{\F}&{\F}&{\F}&{\F}&{\F}&{\F}&\diag{\F}&{\F}&{\F}&{\F}&{\F}&{\F}&{\F}&{\F}&{\F}&{\F}&{\F}&{\F}&{\F}&{\F}&{\F}&{\F}&{\F}&{\F}\\
$7$&\J&{\J}&{\J}&{\J}&{\J}&{\J}&{\F}&{\F}&\diag{\J}&{\F}&{\J}&{\J}&{\J}&{\J}&{\J}&{\J}&{\F}&{\J}&{\F}&{\F}&{\J}&{\J}&{\J}&{\J}&{\J}\\
$8$&\F&{\F}&{\F}&{\F}&{\F}&{\F}&{\F}&{\F}&{\F}&\diag{\F}&{\F}&{\F}&{\F}&{\F}&{\F}&{\F}&{\F}&{\F}&{\F}&{\F}&{\F}&{\F}&{\F}&{\F}&{\F}\\
$9$&\J&{\J}&{\J}&{\J}&{\J}&{\J}&{\F}&{\F}&{\J}&{\F}&\diag{\J}&{\J}&{\J}&{\J}&{\J}&{\J}&{\F}&{\J}&{\F}&{\F}&{\J}&{\J}&{\J}&{\J}&{\J}\\
$10$&\J&{\J}&{\J}&{\J}&{\J}&{\J}&{\F}&{\F}&{\J}&{\F}&{\J}&\diag{\J}&{\J}&{\J}&{\J}&{\J}&{\F}&{\J}&{\F}&{\F}&{\J}&{\J}&{\J}&{\J}&{\J}\\
$11$&\J&{\J}&{\J}&{\J}&{\J}&{\J}&{\F}&{\F}&{\J}&{\F}&{\J}&{\J}&\diag{\J}&{\J}&{\J}&{\J}&{\F}&{\J}&{\F}&{\F}&{\J}&{\J}&{\J}&{\J}&{\J}\\
$12$&\J&{\J}&{\J}&{\J}&{\J}&{\J}&{\F}&{\F}&{\J}&{\F}&{\J}&{\J}&{\J}&\diag{\J}&{\J}&{\J}&{\F}&{\J}&{\F}&{\F}&{\J}&{\J}&{\J}&{\J}&{\J}\\
$13$&\J&{\J}&{\J}&{\J}&{\J}&{\J}&\alertPDb{\F}&{\F}&{\J}&{\F}&{\J}&{\J}&{\J}&{\J}&\diag{\J}&{\J}&{\F}&{\J}&{\F}&{\F}&{\J}&{\J}&{\J}&{\J}&{\J}\\
$14$&\J&{\J}&{\J}&{\J}&{\J}&{\J}&{\F}&\alertPD{\F}&{\J}&{\F}&{\J}&{\J}&{\J}&{\J}&{\J}&\diag{\J}&{\F}&{\J}&{\F}&{\F}&{\J}&{\J}&{\J}&{\J}&{\J}\\
$15$&\F&{\F}&{\F}&{\F}&{\F}&{\F}&{\F}&{\F}&\alertPD{\F}&{\F}&{\F}&{\F}&{\F}&{\F}&{\F}&{\F}&\diag{\F}&{\F}&{\F}&{\F}&{\F}&{\F}&{\F}&{\F}&{\F}\\
$16$&\J&{\J}&{\J}&{\J}&{\J}&{\J}&{\F}&{\F}&{\J}&\alertPD{\F}&{\J}&{\J}&{\J}&{\J}&{\J}&{\J}&{\F}&\diag{\J}&{\F}&{\F}&{\J}&{\J}&{\J}&{\J}&{\J}\\
$17$&\F&{\F}&{\F}&{\F}&{\F}&{\F}&{\F}&{\F}&{\F}&{\F}&\alertPD{\F}&{\F}&{\F}&{\F}&{\F}&{\F}&{\F}&{\F}&\diag{\F}&{\F}&{\F}&{\F}&{\F}&{\F}&{\F}\\
$18$&\F&{\F}&{\F}&{\F}&{\F}&{\F}&{\F}&{\F}&{\F}&{\F}&{\F}&\alertPD{\F}&{\F}&{\F}&{\F}&{\F}&{\F}&{\F}&{\F}&\diag{\F}&{\F}&{\F}&{\F}&{\F}&{\F}\\\hline
&&{\J}&{\J}&{\J}&{\J}&{\J}&{\F}&{\F}&{\J}&{\F}&{\J}&{\J}&{\J}&{\J}&{\J}&{\J}&{\F}&{\J}&{\F}&{\F}&\diag{\J}&{\J}&{\J}&{\J}&{\J}\\
&&{\J}&{\J}&{\J}&{\J}&{\J}&{\F}&{\F}&{\J}&{\F}&{\J}&{\J}&{\J}&{\J}&{\J}&{\J}&{\F}&{\J}&{\F}&{\F}&{\J}&\diag{\J}&{\J}&{\J}&{\J}\\
&&{\J}&{\J}&{\J}&{\J}&{\J}&{\F}&{\F}&{\J}&{\F}&{\J}&{\J}&{\J}&{\J}&{\J}&{\J}&{\F}&{\J}&{\F}&{\F}&{\J}&{\J}&\diag{\J}&{\J}&{\J}\\
&&{\J}&{\J}&{\J}&{\J}&{\J}&{\F}&{\F}&{\J}&{\F}&{\J}&{\J}&{\J}&{\J}&{\J}&{\J}&{\F}&{\J}&{\F}&{\F}&{\J}&{\J}&{\J}&\diag{\J}&{\J}\\
&&{\J}&{\J}&{\J}&{\J}&{\J}&{\F}&{\F}&{\J}&{\F}&{\J}&{\J}&{\J}&{\J}&{\J}&{\J}&{\F}&{\J}&{\F}&{\F}&{\J}&{\J}&{\J}&{\J}&\diag{\J}\\
\end{tabular}

\caption{The Laser method for the $(19,2)$-problem and the seed $Q = \F\F\J\F \J\J\J\J\J\J  \F\J\F\F$ in Example~\ref{ex_laser}.
}
\label{fig_ex_k2}
\end{figure}

%A central concept of this paper is the so-called $(m,k)$-problem, 
%i.e., comparing of two strings of length~$m$ up to $k$~mismatches. 
Even though the basic parameters in the concept of
$(m,k)$-problems are $m$ and $k$;
as follows from Theorem~\ref{thm_detection},
the parameters determining structure of seeds are~$\ell$ and~$k$. 
Therefore, in the next section, we will fix them and study seeds $Q$ solving
$(|Q|+\ell,k)$-problems. Hence, when increasing $m$,
the seed must be extended in order to keep $\ell$ constant.

To complete this section, we show the asymptotic relation of
$m$, $\ell$, and $k$ for optimal seeds.
Let us fix the parameter $k$.
Let $w(m)$ denote the maximal weight of a seed solving the $(m,k)$-problem.
It was proved in~\cite[Lemma~4]{multiseed}
that $m-w(m)\in\Theta(m^{\frac{k}{k+1}})$.
We show that $\ell$ has the same asymptotic behavior.

\begin{proposition}\label{prop_asymptotic}
	Let $k$ be a fixed positive integer and $w(m)$ denote the maximal weight of a seed solving
	the $(m,k)$-problem.
	Let $H(m)$ be the set of all seeds with weight $w(m)$ solving the $(m,k)$-problem.
	For every positive $m$, set $\ell(m) := m - |Q|$, where $Q$ is an arbitrary seed from $H(m)$. Then
	$
		\ell(m) \in \Theta (m - w(m)).
	$
\end{proposition}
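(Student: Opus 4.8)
The plan is to establish the upper bound $\ell(m)\in O(m-w(m))$ and the lower bound $\ell(m)\in\Omega(m-w(m))$ separately. The upper bound is immediate: fix $Q\in H(m)$ and write $j$ for its number of jokers; since $|Q|$ equals the weight $w(m)$ plus $j$, we get $\ell(m)=m-|Q|=(m-w(m))-j\le m-w(m)$. For the lower bound I will prove a combinatorial lower estimate on $j$ valid for \emph{every} seed solving an $(m,k)$-problem, and then feed it into the cited bound $m-w(m)\in\Theta(m^{\frac{k}{k+1}})$ of~\cite[Lemma~4]{multiseed}. The finitely many small values of $m$ are harmless, since there $1\le\ell(m)\le m-w(m)$.

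The core is the following claim: there is a constant $C_k>0$ such that every seed $Q$ solving an $(m,k)$-problem with margin $\ell:=m-|Q|$, provided $|Q|$ exceeds $\ell$ by a sufficiently large amount depending only on $k$, has at least $C_k\,(|Q|-\ell)\,(\ell+1)^{-1/k}$ jokers. I would prove this by double counting. Put $L=|Q|$, let $F\subseteq\{0,\dots,L-1\}$ be the set of positions carrying the symbol $\F$, and let $J$ be its complement, so $|J|=j$. Choose a parameter $D$ of order $L-\ell$ with $D\le L-1-\ell$ (for $k=1$ one takes $D$ of constant order instead). For a base $a\in\{\ell,\dots,L-1-D\}$ and gaps $0<d_2<\dots<d_k\le D$, the error combination $\{a,a+d_2,\dots,a+d_k\}$ lies inside $\{0,\dots,m-1\}$, so by Definition~\ref{def_(m,k)-problem} (equivalently Theorem~\ref{thm_detection}) it is detected at some position $t\in\{0,\dots,\ell\}$; for that $t$, detection is equivalent to the $k$-element set $\{a-t,\,a-t+d_2,\dots,a-t+d_k\}$ being contained in $J$, and its minimum $a-t$ then lies in $\{a-\ell,\dots,a\}$. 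Now count the pairs (base $a$, gap tuple $(d_2,\dots,d_k)$): there are $(L-\ell-D)\binom{D}{k-1}$ of them, and each is witnessed by such a $k$-subset of $J$; conversely a fixed $k$-subset $S$ of $J$ determines its gap tuple, and can be the witness only for the at most $\ell+1$ bases $a$ with $\min S\in\{a-\ell,\dots,a\}$. Hence $(L-\ell-D)\binom{D}{k-1}\le(\ell+1)\binom{j}{k}$, and substituting the chosen $D$ and using $\binom{j}{k}\le j^k/k!$ gives $j\ge C_k\,(L-\ell)\,(\ell+1)^{-1/k}$ after a routine computation.

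With the claim in hand the proposition follows quickly. Fix $Q\in H(m)$. By~\cite[Lemma~4]{multiseed} we have $m-w(m)\in O(m^{\frac{k}{k+1}})$, so in particular $m-w(m)$ is $o(m)$; hence for large $m$, $\ell(m)\le m-w(m)\le m/4$ and therefore $|Q|-\ell(m)=m-2\ell(m)\ge m/2$ is large enough to apply the claim: $m-w(m)\ge j\ge C_k\,(m/2)\,(\ell(m)+1)^{-1/k}$. Rearranging and invoking $m-w(m)\in O(m^{\frac{k}{k+1}})$ once more gives $(\ell(m)+1)^{1/k}\in\Omega(m^{\frac{1}{k+1}})$, i.e.\ $\ell(m)\in\Omega(m^{\frac{k}{k+1}})$; combined with $m^{\frac{k}{k+1}}\in\Theta(m-w(m))$ this is exactly $\ell(m)\in\Omega(m-w(m))$, finishing the proof.

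The step I expect to be the main obstacle is the combinatorial claim — in particular choosing $D$ so as to optimize the counting inequality and pinning down the exponent $1/k$ on $\ell+1$. Note that the cruder ``single-error'' observation (every $\ell+1$ consecutive positions of $Q$ must contain a joker, whence $j\gtrsim|Q|/(\ell+1)$) suffices for $k=1$ but is too weak for $k\ge 2$, so the genuine $k$-error counting above is really needed.
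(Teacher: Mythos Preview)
Your argument is correct, but it takes a noticeably longer route than the paper's. For the upper bound both proofs are identical. For the lower bound, the paper uses a single one-line counting inequality: each of the $\ell+1$ shift positions $t$ detects exactly those error combinations that avoid the $w(m)$ matching positions of $Q$ placed at $t$, so at most $\binom{m-w(m)}{k}$ of them; since all $\binom{m}{k}$ error combinations must be detected, one gets
\[
\binom{m}{k}\;\le\;(\ell+1)\binom{m-w(m)}{k}.
\]
Combining this with $(m-w(m))^{k+1}\in O(m^k)$ from the cited lemma immediately gives $\ell+1\gtrsim m^k/(m-w(m))^k\gtrsim m-w(m)$.

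Your approach instead restricts to error combinations lying entirely inside the seed, double-counts pairs (base, gap tuple) against $k$-subsets of the joker set~$J$, and extracts a lower bound $j\ge C_k(|Q|-\ell)(\ell+1)^{-1/k}$ on the number of jokers; only then do you feed in $j\le m-w(m)$ and the $\Theta(m^{k/(k+1)})$ asymptotics to solve for~$\ell$. This is valid---the choice $a\ge\ell$ and $a+d_k\le |Q|-1$ correctly forces the witnessing subset into $J$, and the ``at most $\ell+1$ bases per subset'' count is right---but the detour through~$j$ and the auxiliary parameter~$D$ is unnecessary for the stated proposition. The payoff of your route is the intermediate joker bound itself, which is a slightly finer statement than the paper proves; the payoff of the paper's route is that the whole lower-bound argument fits in two lines and avoids the case distinction and the optimization over~$D$ you flag as the ``main obstacle.''
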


\begin{proof}
	Since $m \geq \ell(m) + w(m)$, we get trivially the
	upper bound as $\ell(m)\in\OO(m - w(m))$.
	Now let us prove the lower bound. Let $k$ be fixed.
	Since $m-w(m)\in\Theta(m^{\frac{k}{k+1}})$ for
	optimal seeds, it also holds that $(m-w(m))^{k+1} \in \OO(m^k)$. From combinatorial considerations on seed detection, we get
	$
		{m \choose k} \leq {m - w(m) \choose k}(\ell+1).
	$
	By combining the last two formulas, we obtain $\ell(m) \in \Omega(m-w(m))$, which concludes the proof.
\end{proof}

%%%%%%%%%%%%%%%%%%%%%%%%%%%%%%%%%%%%%%%%%%%%%%%%%%%%%%%%%%%%%%%%%%%%%%%%%%%%%%%%%%%
%%%%%%%%%%%%%%%%%%%%%%%%%%%%%%%%%%%%%%%%%%%%%%%%%%%%%%%%%%%%%%%%%%%%%%%%%%%%%%%%%%%
%%%%%%%%%%%%%%%%%%%%%%%%%%%%%%%%%%%%%%%%%%%%%%%%%%%%%%%%%%%%%%%%%%%%%%%%%%%%%%%%%%%

\section{Seed subshifts}\label{sec_seed_subshifts}

In this section, we show the relation between lossless seeds
and subshifts.
First, we denote sets of seeds obtained by fixing
the parameters $\ell$ and $k$.
Afterwards, we prove that they coincide with languages of
certain sofic subshifts. After defining
functions checking the criterion given by
Corollary~\ref{cor_basic_thm} globally on bi-infinite words,
we show that the subshift are exactly the sets of
bi-infinite words, for which these functions have
the upper bound~$\ell$.

\begin{definition}\label{def_S_k(ell)}
	Let $\ell$ and $k$ be positive integers. 
	The set of all seeds such that each seed $Q$ 
	solves the $(|Q|+\ell,k)$-problem
	is denoted by $\seedset{\ell}{k}$.
\end{definition}
\begin{example}\label{ex_1}
	$\seedset{3}{2} = \{\emptyword,\F,\J, \F\J, \J\F, \J\J ,\F\J\J, \J\F\J, \J\J\F, \J\J\J, \F\J\J\F, \F\J\J\J, \J\F\J\J, \J\J\F\J, \J\J\J\F, \J\J\J\J, \ldots\}$.
\end{example}

%%%%%%%%%%%%%%%%%%%%%%%%%%%%%%%%%%%%%%%%%%%%%%%%%%%%%%%%%%%%%%%%%%%
%%%%%%%%%%%%%%%%%%%%%%%%%%%%%%%%%%%%%%%%%%%%%%%%%%%%%%%%%%%%%%%%%%%

\subsection[Functions $\sh_k$ and $(l,k)$-valid bi-infinite words]{Functions $\sh_k$ and $(\ell,k)$-valid bi-infinite words}

\begin{definition}\label{def_sh}
	Consider a positive integer $k$. We define a function
	$\sh_k:\ (\A^\Z)^k \to \N_0\cup\{+\infty\}$ as:
	\begin{equation}\label{eq_def_sh}
		\sh_k(\uu^{(1)},\ldots,\uu^{(k)}) =
		\sup_{i_1,\ldots,i_k\in\Z~}
		\sup_{p\in\N_0}\left\{
			p\ |\ \vv
			[0,p-1] = \F^p, \text{ where }
			\vv = 
			\seedOR \left(
			\shift^{i_1}(\uu^{(1)}),\ldots,\shift^{i_k}(\uu^{(k)})
			\right)
		\right\}.
	\end{equation}
	
We extend the range of the function
$\sh_k(\cdot,\ldots,\cdot)$ to $(\A^*)^k$.
Finite words $w$ are transformed into bi-infinite words $\vv$ as $\vv:=\cdots\J\J|w\J\J\cdots$.
\end{definition}

Informally said, $sh_k(\uu^{(1)},\ldots,\uu^{(k)})$ is equal to
\begin{itemize}
	\item
		a finite $s\in\N_0$ if after arbitrary ``aligning'' of the words followed by the logical \texttt{OR} operation (in the Laser method the diagonal bi-infinite words), each run of $\F$'s has length at most $s$ and the value $s$ is attained for some ``alignment'';
	\item
		$+\infty$ if there exists an ``alignment'' with run of infinitely many $\F$'s (e.g., $\sh_2(\cdots vv|vv\cdots,\cdots ww|ww \cdots)$ with $v=\F\F\J$ and $w=\F\J\J$).
\end{itemize}

Every function $sh_k$ is symmetric and shift invariant with respect to all variables. The following observations show how to estimate
their values for given $k$~bi-infinite words.

\begin{observation}[Lower estimate]\label{obs_shifts_fact}
	Let
	$\uu^{(1)},\ldots,\uu^{(k)}$
	be bi-infinite words.
	If	$\seedOR(\shift^{i_1}(\uu^{(1)}),\ldots,\shift^{i_k}(\uu^{(k)}))$
		has a factor $\F^p$
		for some $i_1,\ldots,i_k$\/; then
		$
			\sh_k(\uu^{(1)},\ldots,\uu^{(k)})\geq p.
		$
\end{observation}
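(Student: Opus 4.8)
The statement to prove is Observation~\ref{obs_shifts_fact}, the lower estimate for $\sh_k$: if some alignment $\seedOR(\shift^{i_1}(\uu^{(1)}),\ldots,\shift^{i_k}(\uu^{(k)}))$ contains a factor $\F^p$, then $\sh_k(\uu^{(1)},\ldots,\uu^{(k)}) \geq p$.

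This is essentially immediate from the definition of $\sh_k$ as a supremum. The plan is as follows. Suppose $\vv := \seedOR(\shift^{i_1}(\uu^{(1)}),\ldots,\shift^{i_k}(\uu^{(k)}))$ has a factor $\F^p$. By definition of ``factor'' there is an index $r \in \Z$ such that $\vv[r, r+p-1] = \F^p$. Now I would exploit shift invariance: replace each $i_j$ by $i_j + r$, so that the new alignment $\vv' := \seedOR(\shift^{i_1+r}(\uu^{(1)}),\ldots,\shift^{i_k+r}(\uu^{(k)})) = \shift^r(\vv)$ satisfies $\vv'[0, p-1] = \F^p$. Then the tuple $(i_1+r,\ldots,i_k+r)$ together with the value $p$ witnesses membership in the set over which the supremum in \eqref{eq_def_sh} is taken, so $\sh_k(\uu^{(1)},\ldots,\uu^{(k)}) \geq p$.

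The only genuine content is the observation that $\seedOR$ commutes with $\shift$ — that is, $\shift^r\bigl(\seedOR(\uu^{(1)},\ldots,\uu^{(k)})\bigr) = \seedOR(\shift^r(\uu^{(1)}),\ldots,\shift^r(\uu^{(k)}))$ — which follows directly by comparing the $i$-th letters on both sides using the definitions of $\seedOR$ and $\shift$, together with $\shift^{i_j} \circ \shift^r = \shift^{i_j + r}$. I do not anticipate any real obstacle here; the ``hard part'', if any, is purely bookkeeping with indices, making sure the factor occurrence at position $r$ is correctly translated to position $0$. Since the paper describes $\sh_k$ as shift invariant in all variables just before the observation, this commutation can be invoked without further comment, and the proof collapses to a single line: pick the alignment shifted so the $\F^p$ block starts at index $0$, and read off the inequality from the supremum.
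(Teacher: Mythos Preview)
Your proposal is correct. The paper states this result as an \emph{observation} with no accompanying proof at all, treating it as immediate from the definition of $\sh_k$ in~\eqref{eq_def_sh}; your argument---shift the alignment so that the $\F^p$ block starts at index~$0$ and read off the inequality from the supremum---is exactly the one-line justification the paper leaves implicit.
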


\begin{observation}[Upper estimate]\label{obs_ord}
	Let
	$\uu^{(1)},\ldots,\uu^{(k)},\vv^{(1)},\ldots,\vv^{(k)}$
	be bi-infinite words such that
	$\uu^{(1)} \preceq \vv^{(1)},\ldots,\uu^{(k)} \preceq \vv^{(k)}$,
	where $\preceq$ is a relation defined as
	\begin{equation}\label{eq_preceq}
		\uu \preceq \vv
		\qquad \iff \qquad
		\left(
			\uu_i = \F \implies \vv_i = \F
		\right)
		\enspace
		\text{holds for all } i\in\Z.
	\end{equation}
	Then $\sh_k(\uu^{(1)},\ldots,\uu^{(k)}) \leq \sh_k(\vv^{(1)},\ldots,\vv^{(k)})$.
\end{observation}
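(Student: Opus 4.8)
The plan is to reduce the inequality to two elementary monotonicity facts — that $\preceq$ is preserved by the shift, and that $\seedOR$ is monotone with respect to $\preceq$ — and then to push these through the two suprema defining $\sh_k$ in~\eqref{eq_def_sh}. First I would record that $\preceq$ is shift invariant: if $\uu \preceq \vv$, then for every $j\in\Z$ and every $i\in\Z$ one has $[\shift^j(\uu)]_i = \uu_{i+j}$ and $[\shift^j(\vv)]_i = \vv_{i+j}$, so $\uu_{i+j}=\F$ forces $\vv_{i+j}=\F$; hence $\shift^j(\uu) \preceq \shift^j(\vv)$. Second, I would note the monotonicity of $\seedOR$: if $\uu^{(1)} \preceq \vv^{(1)}, \ldots, \uu^{(k)} \preceq \vv^{(k)}$, then whenever the $i$-th letter of $\seedOR(\uu^{(1)},\ldots,\uu^{(k)})$ is $\F$, some $(\uu^{(j)})_i=\F$, whence $(\vv^{(j)})_i=\F$, so the $i$-th letter of $\seedOR(\vv^{(1)},\ldots,\vv^{(k)})$ is $\F$ as well; that is, $\seedOR(\uu^{(1)},\ldots,\uu^{(k)}) \preceq \seedOR(\vv^{(1)},\ldots,\vv^{(k)})$.

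Combining the two facts is then immediate. Fix arbitrary shifts $i_1,\ldots,i_k\in\Z$ and put $\vv := \seedOR(\shift^{i_1}(\uu^{(1)}),\ldots,\shift^{i_k}(\uu^{(k)}))$ and $\vv' := \seedOR(\shift^{i_1}(\vv^{(1)}),\ldots,\shift^{i_k}(\vv^{(k)}))$. By shift invariance $\shift^{i_r}(\uu^{(r)}) \preceq \shift^{i_r}(\vv^{(r)})$ for each $r$, and then by monotonicity of $\seedOR$ we get $\vv \preceq \vv'$. Consequently $\vv[0,p-1]=\F^p$ implies $\vv'[0,p-1]=\F^p$, since each of $\vv_0,\ldots,\vv_{p-1}$ being $\F$ forces the corresponding letter of $\vv'$ to be $\F$. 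Hence for every fixed tuple of shifts the set of admissible $p$ for $\vv$ is contained in the one for $\vv'$; as the two outer suprema in~\eqref{eq_def_sh} range over the same index set $\Z^k$, taking suprema gives $\sh_k(\uu^{(1)},\ldots,\uu^{(k)}) \le \sh_k(\vv^{(1)},\ldots,\vv^{(k)})$.

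There is essentially no obstacle here; the argument is a routine ``monotone map, then monotone supremum'' chain. The one point deserving a word is the value $+\infty$: if an alignment of the $\uu^{(r)}$'s yields runs of $\F$'s of unbounded length (or an infinite run), the same alignment of the $\vv^{(r)}$'s yields runs at least as long by the inclusion of admissible $p$ just established, so the case $\sh_k(\uu^{(1)},\ldots,\uu^{(k)})=+\infty$ needs no separate treatment and forces $\sh_k(\vv^{(1)},\ldots,\vv^{(k)})=+\infty$ too.
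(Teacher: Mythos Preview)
Your argument is correct: shift invariance of $\preceq$ and coordinatewise monotonicity of $\seedOR$ together imply that, for each fixed tuple of shifts, the set of admissible $p$ in~\eqref{eq_def_sh} only grows when passing from the $\uu^{(r)}$'s to the $\vv^{(r)}$'s, and taking the two suprema preserves this. The paper itself gives no proof of this observation---it is stated as self-evident---so there is nothing to compare against; your write-up simply spells out the routine monotonicity chain the paper leaves implicit.
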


Bi-infinite words for which the $sh_k$ function is bounded by some $\ell$, will be the ``bricks'' of our subshifts. Their factors $Q$ are exactly those seeds solving $(|Q|+\ell,k)$-problems. 
\begin{definition}\label{def_ell_valid}
	A bi-infinite word $\uu$ satisfying
	$\sh_k(\uu,\ldots,\uu) \leq \ell$
	is called an \emph{$(\ell;k)$-valid bi-infinite word}.
	For fixed positive integers $\ell$ and $k$,
	we denote the set of all $(\ell;k)$-valid words
	by~$\lkvalid{\ell}{k}$.
\end{definition}

\begin{lemma}\label{lem_(l,k)-valid_cor_solving_seeds}
	A seed $Q$ solves the $(|Q|+\ell,k)$-problem if and only if it is a factor of an $(\ell,k)$-valid bi-infinite word.
\end{lemma}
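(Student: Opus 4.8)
The plan is to funnel everything through the ``joker‑padding'' of $Q$, namely the bi‑infinite word $\ww:=\cdots\J\J\,|\,\J^{\ell}Q\,\J\J\cdots$, and to establish the chain
\[
 Q\text{ solves the }(|Q|+\ell,k)\text{-problem}
 \;\iff\; \ww\in\lkvalid{\ell}{k}
 \;\iff\; Q\text{ is a factor of some element of }\lkvalid{\ell}{k}.
\]
The right‑hand equivalence is the easy end. If $\ww\in\lkvalid{\ell}{k}$ we are done, since $Q=\ww[\ell,\ell+|Q|-1]$ is a factor of $\ww$. Conversely, if $Q$ is a factor of some $(\ell,k)$-valid $\uu$, pick $c\in\Z$ with $\uu[c,c+|Q|-1]=Q$; then $\ww\preceq\shift^{c-\ell}(\uu)$, because off the window $[\ell,\ell+|Q|-1]$ the word $\ww$ carries only jokers while on that window it coincides with $\shift^{c-\ell}(\uu)$. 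Observation~\ref{obs_ord} together with the shift invariance of $\sh_k$ then gives $\sh_k(\ww,\ldots,\ww)\le\sh_k\bigl(\shift^{c-\ell}(\uu),\ldots,\shift^{c-\ell}(\uu)\bigr)=\sh_k(\uu,\ldots,\uu)\le\ell$, i.e.\ $\ww\in\lkvalid{\ell}{k}$.

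The substance is the left‑hand equivalence. Write $m:=|Q|+\ell$ and recall from Corollary~\ref{cor_basic_thm} that, for a $k$-element error combination $\{i_1,\ldots,i_k\}\subseteq\{0,\ldots,m-1\}$, the seed $Q$ detects it at no position iff $\bigl(\seedOR(\shift^{i_1}(\ww),\ldots,\shift^{i_k}(\ww))\bigr)[0,\ell]=\F^{\ell+1}$. The implication ``$\ww\in\lkvalid{\ell}{k}\Rightarrow Q$ solves the $(m,k)$-problem'' is then immediate: $\sh_k(\ww,\ldots,\ww)\le\ell$ says that \emph{no} word $\seedOR(\shift^{i_1}(\ww),\ldots,\shift^{i_k}(\ww))$, $i_1,\ldots,i_k\in\Z$, contains the factor $\F^{\ell+1}$, so in particular for every $k$-element error combination in $\{0,\ldots,m-1\}$ the corresponding word is not $\F^{\ell+1}$ on $[0,\ell]$, and Corollary~\ref{cor_basic_thm} yields detection. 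For the converse I argue contrapositively. If $\ww\notin\lkvalid{\ell}{k}$, some word $\seedOR(\shift^{j_1}(\ww),\ldots,\shift^{j_k}(\ww))$ has an occurrence of $\F^{\ell+1}$, say at positions $[r,r+\ell]$; since $\seedOR$ commutes with $\shift$, replacing every $j_t$ by $j_t+r$ moves the block to $[0,\ell]$, so we may assume $\bigl(\seedOR(\shift^{j_1}(\ww),\ldots,\shift^{j_k}(\ww))\bigr)[0,\ell]=\F^{\ell+1}$. Now the only $\F$'s of $\ww$ lie in the window $[\ell,m-1]$, hence whenever $j\notin\{0,\ldots,m-1\}$ the word $\shift^{j}(\ww)$ is all jokers on $[0,\ell]$, so such shifts contribute nothing to the $\seedOR$ there; discarding them (the remaining index set is nonempty, else the $\seedOR$ would be $\J^{\ell+1}$ on $[0,\ell]$) and, using that adding more arguments to $\seedOR$ can only turn jokers into matches, refilling the freed slots by fresh indices from $\{0,\ldots,m-1\}$ (possible since $m\ge k$), we obtain a genuine $k$-element error combination $\{j_1,\ldots,j_k\}\subseteq\{0,\ldots,m-1\}$ whose $\seedOR$ is still $\F^{\ell+1}$ on $[0,\ell]$. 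By Corollary~\ref{cor_basic_thm} this combination is detected by $Q$ at no position, so $Q$ does not solve the $(m,k)$-problem.

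I expect the bookkeeping in this last step to be the main obstacle: an a~priori occurrence of $\F^{\ell+1}$ sits at an arbitrary position, is produced by arbitrary \emph{integer} shifts that may fall outside $\{0,\ldots,m-1\}$ and may coincide, whereas Corollary~\ref{cor_basic_thm} is phrased for honest $k$-element subsets of $\{0,\ldots,m-1\}$. The three facts that make the reduction go through are the shift‑equivariance of $\seedOR$ (to normalise the position to $[0,\ell]$), the confinement of the $\F$'s of $\ww$ to a window of length $|Q|$ (so that out‑of‑range shifts are inert on $[0,\ell]$), and the monotonicity of $\seedOR$ in its arguments (to pad the index set back up to exactly $k$ distinct elements). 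One should also keep the harmless standing hypothesis $m=|Q|+\ell\ge k$ in mind, since for $m<k$ there are no error combinations of $k$ errors and the statement degenerates.
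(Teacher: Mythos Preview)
Your proof is correct and follows essentially the same route as the paper: both arguments pivot on the padded word $\ww=\cdots\J\J\,|\,\J^{\ell}Q\,\J\J\cdots$, use Observation~\ref{obs_ord} and shift invariance to pass between $\ww$ and an arbitrary $(\ell,k)$-valid $\uu$ containing $Q$, and invoke Corollary~\ref{cor_basic_thm} for the link between solving and the $\F^{\ell+1}$ block. The only real difference is one of completeness: the paper's $\Rightarrow$ direction simply asserts that $\ww$ is $(\ell,k)$-valid ``since otherwise $Q$ would not solve \ldots\ by Corollary~\ref{cor_basic_thm}'', whereas you spell out the bookkeeping that this appeal actually requires---normalising the $\F^{\ell+1}$ block to $[0,\ell]$ via shift-equivariance, discarding shifts outside $\{0,\ldots,m-1\}$ because they contribute only jokers on $[0,\ell]$, and padding back up to $k$ distinct indices by monotonicity of $\seedOR$. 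That extra care is warranted, since Corollary~\ref{cor_basic_thm} is stated only for genuine $k$-element error combinations, not arbitrary integer shift tuples.
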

\begin{proof}
	\begin{enumerate}
	\item[$\implies$:]
		The word
		$\ww := \cdots \J\J | \J^\ell Q \J\J\cdots$
		must be $(\ell,k)$-valid since otherwise $Q$ would not solve the $(|Q|+\ell,k)$-problem by Corollary~\ref{cor_basic_thm}.
		
	\item[$\impliedby$:]
			For a contradiction assume that there exists a factor~$Q$ of a bi-infinite word~$\uu$,
			which does not solve the $(|Q| + \ell,k)$-problem.
			Let the non-detected error combination be
			$\{i_1,\ldots,i_k\}$.
			Denote $\ww = \cdots\J\J| \J^\ell Q \J\J\cdots$.
	
			We use shift invariance of $\sh_k$ and Observation~\ref{obs_ord} to get
			\begin{equation}\label{eq_contr_vv_Q}
				\sh_k(\ww,\ldots,\ww) \leq \sh_k(\uu,\ldots,\uu) \leq \ell.
			\end{equation}
			Since $Q$ does not detect the error combination
			$\{i_1,\ldots,i_k\}$, it follows from Corollary~\ref{cor_basic_thm} that
			\[
				(\seedOR(\shift^{i_1}(\ww), \ldots, \shift^{i_k}(\ww))[0,\ell] = \F^{\ell+1}.
			\]			
			Nevertheless, this gives us a lower estimate on
			$\sh_k(\ww,\ldots,\ww)$, which is
			contradicting~\eqref{eq_contr_vv_Q}.\qedhere
	\end{enumerate}
\end{proof}

%%%%%%%%%%%%%%%%%%%%%%%%%%%%%%%%%%%%%%%%%%%%%%%%%%%%%%%%%%%%%%%%%%%
%%%%%%%%%%%%%%%%%%%%%%%%%%%%%%%%%%%%%%%%%%%%%%%%%%%%%%%%%%%%%%%%%%%

\subsection[subshifts of $(l,k)$-valid words]{Subshifts of $(\ell,k)$-valid words}

The property of $(\ell,k)$-validity is preserved under the shift operation. 
Moreover, the sets $\lkvalid{\ell}{k}$ of $(\ell,k)$-valid words are subshifts.
To prove it, we need to find a criterion for verifying $(\ell,k)$-validity based on comparing finite factors of a given bi-infinite word.
	
\begin{lemma}\label{lem_pairs}
	Let $\uu$ be a bi-infinite word over the seed alphabet $\A$.
	Then the following statements are equivalent:
	\begin{enumerate}
		\item $\uu$ is $(\ell;k)$-valid;
		\item $\forall v^{(1)},\ldots,v^{(k)} 
				\in \LL_{\ell+1}(\uu)\bigl(
					\sh_k (v^{(1)},\ldots,v^{(k)}) \leq \ell
				\bigr)$;
		\item $\forall w^{(1)},\ldots,w^{(k)} 
				\in \LL_{\ell+1}(\uu)\bigl(
					\seedOR (w^{(1)},\ldots,w^{(k)}) \not= \F^{\ell+1}
				\bigr)$.
			%\end{equation}
	\end{enumerate}
\end{lemma}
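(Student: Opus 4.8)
The plan is to prove the cycle of implications $(1)\Rightarrow(2)\Rightarrow(3)\Rightarrow(1)$. The heart of the matter is the equivalence $(2)\Leftrightarrow(3)$, which is essentially a restatement of the definition of $\sh_k$ restricted to words of length exactly $\ell+1$: if $w^{(1)},\ldots,w^{(k)}\in\LL_{\ell+1}(\uu)$ are transformed into bi-infinite words $\vv^{(j)}=\cdots\J\J|w^{(j)}\J\J\cdots$, then $\seedOR(\vv^{(1)},\ldots,\vv^{(k)})$ can contain the factor $\F^{\ell+1}$ only at the ``aligned'' position $[0,\ell]$, because outside that range at least one of the $\vv^{(j)}$ carries only $\J$'s; hence $\sh_k(w^{(1)},\ldots,w^{(k)})\geq\ell+1$ is equivalent to $\seedOR(w^{(1)},\ldots,w^{(k)})=\F^{\ell+1}$. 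Spelling this out gives $(2)\Leftrightarrow(3)$ once one observes that $\sh_k$ of words of length $\ell+1$ is always at most $\ell+1$, so ``$\leq\ell$'' fails precisely when the value is $\ell+1$.

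For $(1)\Rightarrow(2)$ I would use Observation~\ref{obs_ord} together with shift invariance of $\sh_k$. Given factors $v^{(1)},\ldots,v^{(k)}\in\LL_{\ell+1}(\uu)$, each $v^{(j)}$ occurs in $\uu$, so the bi-infinite word $\cdots\J\J|v^{(j)}\J\J\cdots$ satisfies $\cdots\J\J|v^{(j)}\J\J\cdots\ \preceq\ \shift^{n_j}(\uu)$ for a suitable shift $n_j$ aligning the occurrence; applying Observation~\ref{obs_ord} and then shift invariance yields $\sh_k(v^{(1)},\ldots,v^{(k)})\leq\sh_k(\uu,\ldots,\uu)\leq\ell$.

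The remaining and slightly more delicate direction is $(2)\Rightarrow(1)$ (equivalently $(3)\Rightarrow(1)$): I must show that a run of $\F$'s of length $>\ell$ in some global alignment $\seedOR(\shift^{i_1}(\uu),\ldots,\shift^{i_k}(\uu))$ forces a violation of~(3) already on factors of length $\ell+1$. Suppose such a run exists; by restricting to a window of length exactly $\ell+1$ inside that run we obtain indices $i_1,\ldots,i_k$ and a position $q$ with $\bigl(\seedOR(\shift^{i_1}(\uu),\ldots,\shift^{i_k}(\uu))\bigr)[q,q+\ell]=\F^{\ell+1}$. For each $j$, let $w^{(j)}:=\uu[i_j+q,\,i_j+q+\ell]\in\LL_{\ell+1}(\uu)$; these are genuine length-$(\ell+1)$ factors of $\uu$, and by construction $\seedOR(w^{(1)},\ldots,w^{(k)})=\F^{\ell+1}$, contradicting~(3). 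The main obstacle is purely bookkeeping: keeping the index shifts consistent so that the chosen length-$(\ell+1)$ windows of $\uu$ line up under $\seedOR$ exactly as the corresponding windows of the shifted bi-infinite words do. Once the indexing is pinned down, the implication is immediate, and the chain $(1)\Rightarrow(2)\Rightarrow(3)\Rightarrow(1)$ closes the proof.
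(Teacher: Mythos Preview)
Your cycle $(1)\Rightarrow(2)\Rightarrow(3)\Rightarrow(1)$ and your arguments for $(1)\Rightarrow(2)$ and $(3)\Rightarrow(1)$ match the paper's proof essentially verbatim (the paper's $3\Rightarrow 1$ is stated tersely, but your windowing argument is exactly what is meant). However, your discussion of $(2)\Leftrightarrow(3)$ contains two incorrect auxiliary claims. First, it is \emph{not} true that $\sh_k$ of words of length $\ell+1$ is always at most $\ell+1$: with $\ell=1$, $k=2$ and $w^{(1)}=w^{(2)}=\F\F$ one gets $\sh_2(\F\F,\F\F)=4$, since shifting one copy by two positions makes the two $\F\F$ blocks abut into $\F^4$. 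Second, and for the same reason, the tuple-by-tuple equivalence ``$\sh_k(w^{(1)},\ldots,w^{(k)})\geq\ell+1 \iff \seedOR(w^{(1)},\ldots,w^{(k)})=\F^{\ell+1}$'' is false: with $\ell=2$, $k=2$, the pair $(\F\F\J,\F\F\J)$ satisfies $\sh_2=4>\ell$ while $\seedOR(\F\F\J,\F\F\J)=\F\F\J\neq\F^3$. Your reasoning only controls the \emph{unshifted} $\seedOR$, whereas $\sh_k$ ranges over all shifts. The universally quantified statements (2) and (3) are indeed equivalent, but not via the pointwise equivalence you propose.

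Fortunately this does not damage the proof. For the cycle you only need the forward implication $(2)\Rightarrow(3)$, and that direction is immediate (the paper dismisses it as ``an easy consequence of the definition of the $\sh_k$ function''): if $\seedOR(w^{(1)},\ldots,w^{(k)})=\F^{\ell+1}$, then already the trivial alignment $i_1=\cdots=i_k=0$ witnesses $\sh_k\geq\ell+1$ via Observation~\ref{obs_shifts_fact}. With this correction your argument is complete and coincides with the paper's.
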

\begin{proof}
	We prove three implications.
	 
	\begin{enumerate}[leftmargin=1.4cm]
		\item[1$\implies$2:]
			Consider any such factors $v^{(1)},\ldots,v^{(k)}$. Find their positions $i_1,\ldots,i_k$ in $\uu$.
				It holds that
				\[
					\cdots\J\J|v^{(1)}\J\J\cdots \preceq \shift^{i_1}(\uu), \enspace
					\ldots, \enspace
					\cdots\J\J|v^{(k)}\J\J\cdots \preceq \shift^{i_k}(\uu),
				\]
				where $\preceq$ is the relation defined
				by~\eqref{eq_preceq}.
				By combining the assumption, shift invariance of $\sh_k$, and Observation~\ref{obs_ord}, we obtain
				$
					\sh_k (v^{(1)},\ldots,v^{(k)})
					\leq
					\sh_k(\uu,\ldots,\uu)
					\leq 
					\ell
				$.
		
		\item[2$\implies$3:]
			It is an easy consequence of the definition of the $\sh_k$ function.
		
		\item[3$\implies$1:]
			For a contradiction assume that $\uu$ is not $(\ell,k)$-valid. Then
			there exist integers $i_1,\ldots,i_k$ such that
			$\seedOR ( \shift^{i_1}(\uu), \ldots, \shift^{i_1}(\uu))[0,\ell] = \F^{\ell+1}$.
			%Therefore, the equation~\eqref{eq_sh_factors_or} does not hold for the factors
			%$w^{(1)}:=\uu[i_1,i_1+\ell],\ldots,w^{(k)}:=\uu[i_k,i_k+\ell]$.
			\qedhere
	\end{enumerate}
\end{proof}

%In Example~\ref{ex_laser}, the ``problematic'' words $w^{(1)}$ and %$w^{(2)}$ of length $\ell+1$ are
%$\F\F\J\F\J\J$ and $\J\J\F\J\F\F$. Similarly, $v^{(1)}$ and %$v^{(2)}$ can be again 
%$\F\F\J\F\J\J$ and $\J\J\F\J\F\F$, but also for example $\J\F\F\J%\F\J$ and $\F\J\F\F\J\J$.

The main consequence of Lemma~\ref{lem_pairs} is the fact that
every seed must be constructed from reciprocally compatible tiles of length $\ell+1$. To describe this property, we define a relation of compatibility on the set $\A^{\ell+1}$.

\begin{definition}\label{def_compatible}
	For given positive integers $\ell$ and $k$, we define the $k$-nary \emph{compatibility relation} $\compatible{\ell}{k}$ on $\A^{\ell+1}$ as
	\[
		\compatible{\ell}{k}(v^{(1)},\ldots,v^{(k)})
		\quad \iff \quad
		\sh_k(v^{(1)},\ldots,v^{(k)})\leq \ell.
	\]
\end{definition}

\begin{corollary}\label{cor_comp}
	Let $\uu$ be a bi-infinite word over the seed alphabet $\A$.
	The word $\uu$ is $(\ell,k)$-valid if and only if
	\enspace
	$
		\forall v^{(1)},\ldots,v^{(k)}\in\LL_{\ell+1}(\uu)
		\left(
			\compatible{\ell}{k}(v^{(1)},\ldots,v^{(k)})
		\right).
	$
\end{corollary}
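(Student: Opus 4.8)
The plan is to obtain Corollary~\ref{cor_comp} as an immediate restatement of Lemma~\ref{lem_pairs}, using Definition~\ref{def_compatible} purely as a notational abbreviation. Concretely, I would observe that the quantified condition
\[
	\forall v^{(1)},\ldots,v^{(k)}\in\LL_{\ell+1}(\uu)
	\left(
		\compatible{\ell}{k}(v^{(1)},\ldots,v^{(k)})
	\right)
\]
is, by unfolding Definition~\ref{def_compatible}, literally the statement
\[
	\forall v^{(1)},\ldots,v^{(k)}\in\LL_{\ell+1}(\uu)
	\left(
		\sh_k(v^{(1)},\ldots,v^{(k)})\leq\ell
	\right),
\]
which is exactly item~2 in Lemma~\ref{lem_pairs}. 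Since Lemma~\ref{lem_pairs} asserts the equivalence of its item~1 ($\uu$ is $(\ell,k)$-valid) with its item~2, the corollary follows at once.

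So the \emph{only} step is the substitution of the defined relation symbol, and there is no genuine obstacle: the work was already done in Lemma~\ref{lem_pairs}. If I wanted to be slightly more self-contained I would note that one also needs the relation $\compatible{\ell}{k}$ to be well-defined on $\A^{\ell+1}$ — i.e.\ that $\sh_k$ has been extended to finite words — but this is granted by the last sentence of Definition~\ref{def_sh}, where finite words $w$ are embedded into bi-infinite words $\cdots\J\J|w\J\J\cdots$. Hence the proof is a one-line appeal:

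\begin{proof}
	This is a reformulation of the equivalence of statements~1 and~2 in
	Lemma~\ref{lem_pairs}, using the abbreviation introduced in
	Definition~\ref{def_compatible}.
\end{proof}

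The hard part, such as it is, lies entirely upstream in proving the three implications of Lemma~\ref{lem_pairs} (in particular the direction $3\implies1$, which produces the offending alignment witnessing non-validity); once that lemma is in hand, Corollary~\ref{cor_comp} carries no additional content and serves only to package the compatibility-of-tiles viewpoint that the following sections exploit.
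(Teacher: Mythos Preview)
Your proposal is correct and matches the paper's treatment: the paper states Corollary~\ref{cor_comp} immediately after Definition~\ref{def_compatible} with no proof, evidently intending it as a direct restatement of the equivalence $1\Leftrightarrow2$ of Lemma~\ref{lem_pairs} under the new notation. Your one-line appeal is exactly the intended argument.
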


Now let us prove that $(\ell,k)$-valid words really form subshifts. 
We only need to show that $(\ell,k)$-valid words are exactly
those words, which can be
created from compatible ``tiles''.
\begin{lemma}\label{lem_lkvalid=shift_space}
	Let $\ell$ and $k$ be positive integers.
	The set $\lkvalid{\ell}{k}$ of all $(\ell,k)$-valid words
	is a subshift.
\end{lemma}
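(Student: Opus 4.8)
The plan is to use the standard characterization of subshifts mentioned in the Preliminaries: a set $S \subseteq \A^{\Z}$ is a subshift if and only if it is shift-invariant and closed in the Cantor metric. So I would split the proof into two parts.

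First, shift-invariance. Since $\sh_k$ is shift-invariant in each of its arguments (as remarked after Definition~\ref{def_sh}), we have $\sh_k(\shift(\uu),\ldots,\shift(\uu)) = \sh_k(\uu,\ldots,\uu)$ for every bi-infinite word $\uu$. Hence $\uu \in \lkvalid{\ell}{k}$ if and only if $\shift(\uu) \in \lkvalid{\ell}{k}$, which gives $\shift(\lkvalid{\ell}{k}) = \lkvalid{\ell}{k}$. This step is essentially immediate.

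Second, closedness. Here I would lean on Corollary~\ref{cor_comp}, which says that $\uu$ is $(\ell,k)$-valid exactly when every $k$-tuple of length-$(\ell+1)$ factors of $\uu$ satisfies the compatibility relation $\compatible{\ell}{k}$. This reduces the condition to a statement about finite factors of bounded length. Concretely, I would show the complement $\A^{\Z}\setminus\lkvalid{\ell}{k}$ is open: if $\uu$ is not $(\ell,k)$-valid, then by Corollary~\ref{cor_comp} there exist factors $v^{(1)},\ldots,v^{(k)} \in \LL_{\ell+1}(\uu)$ with $\lnot\,\compatible{\ell}{k}(v^{(1)},\ldots,v^{(k)})$; these factors occur at some finite positions in $\uu$, say all within the index window $[-N, N]$ for some $N$. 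Any bi-infinite word $\uu'$ agreeing with $\uu$ on $[-N,N]$ then contains the same $k$ factors at the same positions, so it is also not $(\ell,k)$-valid. Since $\{\uu' : \uu'[-N,N] = \uu[-N,N]\}$ is an open ball of radius $2^{-N}$ (or smaller) around $\uu$ in the Cantor metric, the complement is open, hence $\lkvalid{\ell}{k}$ is closed. Alternatively — and perhaps more cleanly — one can phrase this as: $\lkvalid{\ell}{k} = S_X$ where $X = \{\, w \in \A^{2(\ell+1)+\text{const}} : w \text{ witnesses an incompatible }k\text{-tuple of }(\ell+1)\text{-factors}\,\}$ is the (finite, since $\A$ is finite and the relevant window has bounded length — actually one should be a little careful to encode $k$ factors and their relative offsets, but the offsets can be normalized so that the witnessing window has bounded length) set of forbidden factors, making $\lkvalid{\ell}{k}$ a subshift of finite type directly.

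The main obstacle is making the "witnessing window has bounded length" claim precise. The failure of compatibility involves $k$ factors aligned at arbitrary relative shifts $i_1,\ldots,i_k \in \Z$, so a priori the witness could be spread arbitrarily far apart in $\uu$. The fix is to observe that in the definition of $\sh_k$ via $\seedOR(\shift^{i_1}(\uu^{(1)}),\ldots)$, only the overlap region matters for producing a run $\F^{\ell+1}$: one can translate all the $\uu^{(j)}$ simultaneously and assume the run of $\F$'s sits at positions $[0,\ell]$, and then each $\uu^{(j)}$ need only contribute its letters at those $\ell+1$ positions, so it suffices to know $k$ factors of length $\ell+1$ together with their relative offsets confined to a window of length $\ell+1$. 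This is exactly the content already packaged in Lemma~\ref{lem_pairs} and Corollary~\ref{cor_comp}, so once those are invoked the window-boundedness is automatic and the closedness argument above goes through with $N$ depending only on where the $k$ offending factors happen to appear in the particular word $\uu$ — which is all that is needed for openness of the complement.
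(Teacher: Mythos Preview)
Your primary argument is correct: shift-invariance is immediate from the shift-invariance of $\sh_k$, and closedness follows from Corollary~\ref{cor_comp} by the open-complement argument you give, with $N$ depending on the particular word $\uu$. The paper takes a slightly different but equivalent route: rather than verifying shift-invariance and closedness separately, it directly exhibits a forbidden set
\[
X = \bigl\{ x \in \A^* \ \big|\ \exists\, v^{(1)},\ldots,v^{(k)} \in \LL_{\ell+1}(x)\ \bigl(\lnot\,\compatible{\ell}{k}(v^{(1)},\ldots,v^{(k)})\bigr) \bigr\}
\]
and concludes $\lkvalid{\ell}{k} = S_X$ from Corollary~\ref{cor_comp}. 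Both arguments rest on the same corollary and are interchangeable via the standard characterization of subshifts.

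Your \emph{alternative} formulation, however, contains a real error. You claim the forbidden set $X$ can be taken finite, so that $\lkvalid{\ell}{k}$ is a subshift of finite type. This is false in general: the $k$ incompatible length-$(\ell+1)$ factors may sit arbitrarily far apart in the host word, so minimal forbidden words have unbounded length. The paper makes exactly this point immediately after Example~\ref{ex_nonexsingle}: for $\ell=5$, $k=2$, the words $Q^{(1)}\J^{p}Q^{(2)}$ are forbidden for every $p\ge 0$, and $\lkvalid{5}{2}$ is explicitly noted \emph{not} to be of finite type. Your attempted fix (``the offsets can be normalized so that the witnessing window has bounded length'') conflates the shifts $i_1,\ldots,i_k$ in the definition of $\sh_k$ with the positions of the factors inside $\uu$; only the former can be normalized. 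Fortunately this does not affect the lemma: your first argument already suffices, and the paper only asserts that $\lkvalid{\ell}{k}$ is a subshift (later shown sofic), not one of finite type.
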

\begin{proof}
	We prove the lemma by construction of a set
	$X$ of forbidden words (as they are introduced
	in~\ref{subsec_symdyn}).
	Take
	\[
		X := \bigl\{
			x\in\A^*
			\ |\ \exists 
				v^{(1)},\ldots,v^{(k)}
			\in \LL_{\ell+1}(x)
			%(\seedOR(R_1,\ldots,R_k) \not= \F^{\ell+1})
			\bigl(
				\lnot \compatible{\ell}{k}(v^{(1)},\ldots,v^{(k)})
			\bigr)
		\bigr\}.
	\]
	The set $X$ contains all possible finite words having
	some factors, which are ``incompatible'' with respect to
	the given~$\ell$ and~$k$.
	Hence, the subshift $S_X$ contains exactly all
	bi-infinite words $\uu$ satisfying
	$\forall v_1,\ldots,v_k\in\LL_{\ell+1}(\uu)
	\bigl( \compatible{\ell}{k} (v_1,\ldots,v_k) \bigr)$
	and we obtain $S_X = \lkvalid{\ell}{k}$ by
	Corollary~\ref{cor_comp}.
\end{proof}

\begin{example}\label{ex_nonexsingle}
	Even though both of the seeds
	$Q^{(1)} = \F\F\J\F\J\J$ and
	$Q^{(2)} = \J\J\F\J\F\F$ solve the $(11,2)$-problem,
the seed $Q = Q^{(1)} \J\J Q^{(2)}$ does not solve the
$(19,2)$-problem as we have seen in Example~\ref{ex_laser}. Since
$Q^{(1)} \oplus Q^{(2)} = \F^6$, any seed $\tilde{Q}$ of the form $\tilde{Q} = Q^{(1)} \J^p Q^{(2)}$ cannot solve the $(|\tilde{Q}|+5,2)$-problem.
\end{example}

%%%%%%%%%%%%%%%%%%%%%%%%%%%%%%%%%%%%%%%%%%%%%%%%%%%%%%%%%%%%%%%%%%%
%%%%%%%%%%%%%%%%%%%%%%%%%%%%%%%%%%%%%%%%%%%%%%%%%%%%%%%%%%%%%%%%%%%

%\subsection{Decomposition into subshifts of finite type}

%Example~\ref{ex_nonexsingle}
It follows from the last example that the subshift
$\lkvalid{5}{2}$ of all $(5,2)$-valid words
is not of finite type.
Nevertheless, every subshift $\lkvalid{\ell}{k}$ %of $(\ell,k)$-valid words
must be a union of subshifts of finite type,
which can be constructed from so-called $(\ell,k)$-generating sets.

\begin{definition}\label{def_gen_set_nemax}
	For given positive integers $\ell$ and $k$, a subset $G$ of $\A^{\ell+1}$ is called \emph{$(\ell,k)$-generating set} if the following conditions are satisfied:
	\begin{enumerate}
		\item for all $v^{(1)},\ldots,v^{(k)} \in G$, it holds
			$
				\compatible{\ell}{k}(v^{(1)},\ldots,v^{(k)});
			$
		\item it is maximal possible (i.e., it cannot contain any other word from $\A^{\ell+1}$).
	\end{enumerate}
\end{definition}

\begin{observation}
Let us take a word from 
an $(\ell,k)$-generating set~$G$. If we remove the last or the first letter and concatenate the letter~$\J$ to the beginning or to the end of the word, we obtain again a word from~$G$. Therefore,
every $(\ell,k)$-generating set $G$~must contain, e.g., the
word~$\J^{\ell+1}$.
\end{observation}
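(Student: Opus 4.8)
The plan is to exploit the maximality of $G$ together with the monotonicity of $\sh_k$ recorded in Observation~\ref{obs_ord}. Fix a word $v = v_0 v_1 \cdots v_\ell \in G$ and set $v' := \J\, v_0 v_1 \cdots v_{\ell-1}$. Since $G$ is a \emph{maximal} subset of $\A^{\ell+1}$ satisfying condition~(1) of Definition~\ref{def_gen_set_nemax}, to conclude $v' \in G$ it suffices to verify that $G \cup \{v'\}$ still satisfies condition~(1), i.e.\ that $\compatible{\ell}{k}(u^{(1)},\ldots,u^{(k)})$ holds for every $k$-tuple taken from $G \cup \{v'\}$. Such a tuple either lies entirely in $G$, in which case compatibility is part of the hypothesis, or it has at least one entry equal to $v'$; using that $\sh_k$ is symmetric we may reorder it as $u^{(1)} = \cdots = u^{(j)} = v'$ with $j \geq 1$ and $u^{(j+1)},\ldots,u^{(k)} \in G$.

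For this remaining case I would compare $v'$ with $v$ via the relation $\preceq$ of~\eqref{eq_preceq}. Inspecting the positions of the letter~$\F$ shows that, after a suitable single shift, the bi-infinite word $\cdots\J\J|v'\J\J\cdots$ associated with $v'$ is dominated, in the sense of $\preceq$, by the bi-infinite word $\cdots\J\J|v\J\J\cdots$ associated with $v$: the shift realigns the block $v_0\cdots v_{\ell-1}$ inside $v'$ with the corresponding prefix of $v$, and prepending the extra $\J$ can only delete an $\F$. Using shift invariance of $\sh_k$ in each argument, then Observation~\ref{obs_ord} coordinatewise, we obtain
\[
	\sh_k(v',\ldots,v',u^{(j+1)},\ldots,u^{(k)})
	\;\leq\;
	\sh_k(\underbrace{v,\ldots,v}_{j},\,u^{(j+1)},\ldots,u^{(k)})
	\;\leq\;
	\ell,
\]
the last inequality holding because $v,u^{(j+1)},\ldots,u^{(k)}$ all belong to $G$, which satisfies condition~(1) with repetitions allowed. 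Hence $\compatible{\ell}{k}(v',u^{(j+1)},\ldots,u^{(k)})$, so $v'\in G$. The case of $v_1\cdots v_\ell\,\J$ (delete the first letter, append $\J$) is entirely symmetric, with the shift taken in the opposite direction.

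Finally, applying the first operation $\ell+1$ times to any $v\in G$ transforms $v_0\cdots v_\ell$ step by step into $\J^{\ell+1}$, so $\J^{\ell+1}\in G$; for this we only need $G\neq\emptyset$, which is immediate since $\{\J^{\ell+1}\}$ satisfies condition~(1) — indeed $\sh_k(\J^{\ell+1},\ldots,\J^{\ell+1}) = 0 \leq \ell$ — so the empty set is not maximal. I expect the only delicate point to be the index bookkeeping in the $\preceq$-comparison, namely selecting the correct shift so that every $\F$ of the shortened-and-padded word lies above an $\F$ of $v$; this is routine.
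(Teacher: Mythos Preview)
Your argument is correct. The paper states this result as an Observation without proof, so there is nothing to compare against beyond the implicit reasoning; your write-up makes that reasoning explicit using exactly the ingredients the paper provides --- maximality of $G$, shift invariance of $\sh_k$, and the monotonicity of Observation~\ref{obs_ord} --- and the key step $\shift(\cdots\J\J|v'\J\J\cdots) \preceq \cdots\J\J|v\J\J\cdots$ is precisely what is needed.
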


Every generating set $G$ fully determines a subshift of finite type, we will denote it by $S(G)$. This subshift contains all bi-infinite words $\uu$ such that $\LL_{\ell+1}(\uu)\subseteq G$.

%To allow only factors from $G$, we forbid all other words of length $\ell+1$
%(i.e., $X := \A^{\ell+1} \setmin G$) and then construct the %subshift.

\begin{definition}\label{def_generating_set}
	Consider a seed $Q$ and an $(\ell,k)$-generating set $G$.
	By $S(G)$, we denote
	the subshift $S_X$ of finite type given by $X = \A^{\ell+1}\setmin G$.	
	We say that a seed $Q$ is \emph{generated} by $G$ if $Q\in \LL(S(G))$.
\end{definition}

In other words, a seed $Q$ satisfying $|Q| \geq \ell+1$ is generated by $G$ if $\LL_{\ell+1}(Q)\subseteq G$.
A seed $Q$ such that $|Q| < \ell+1$ is generated by $G$ if
$\exists w \in G \bigl(Q \in \LL(w)\bigr)$. We can also observe 
that every $(\ell,k)$-valid word is generated by some
$(\ell,k)$-generating set.

\begin{observation}\label{obs_generating_set_existence}
	For every $(\ell,k)$-valid bi-infinite word $\uu$, there exists an $(\ell,k)$-generating set $G$ such that $\uu\in S(G)$.
\end{observation}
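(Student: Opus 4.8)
The plan is to start from the set of all length-$(\ell+1)$ factors of $\uu$ and enlarge it greedily, inside the finite set $\A^{\ell+1}$, to a maximal $k$-wise compatible set. Concretely, I would first note that since $\uu$ is $(\ell,k)$-valid, Corollary~\ref{cor_comp} tells us that $G_0 := \LL_{\ell+1}(\uu)\subseteq\A^{\ell+1}$ already satisfies the first condition of Definition~\ref{def_gen_set_nemax}: every $k$-tuple (with repetitions allowed) of elements of $G_0$ is related by $\compatible{\ell}{k}$.

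Next I would extend $G_0$ to a maximal such set. The observation making a greedy extension legitimate is that ``condition~1 of Definition~\ref{def_gen_set_nemax}'' is inherited by subsets: if all $k$-tuples drawn from a set are compatible, the same holds for every smaller set, as there are only fewer tuples to check. Hence, starting from $G_0$, I can repeatedly adjoin an arbitrary word $w\in\A^{\ell+1}$ whose addition preserves condition~1; since $\A^{\ell+1}$ is finite this halts after finitely many steps at a set $G\supseteq G_0$ that satisfies condition~1 and admits no further addition, i.e., an $(\ell,k)$-generating set.

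Finally, I would apply the description of $S(G)$ from Definition~\ref{def_generating_set}: $S(G)=S_X$ with $X=\A^{\ell+1}\setmin G$, which --- all words of $X$ having length $\ell+1$ --- consists exactly of the bi-infinite words whose length-$(\ell+1)$ factors all lie in $G$. Since $\LL_{\ell+1}(\uu)=G_0\subseteq G$, this gives $\uu\in S(G)$. I do not expect any real obstacle here; the only points deserving care are the subset-monotonicity of condition~1 (so that the greedy procedure never needs to backtrack) and the finiteness of $\A^{\ell+1}$ (so that neither Zorn's lemma nor a compactness argument is required).
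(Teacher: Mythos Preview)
Your argument is correct and is exactly the natural justification the paper has in mind; the paper itself states this as an Observation and gives no proof at all. Your greedy extension of $\LL_{\ell+1}(\uu)$ to a maximal compatible subset of the finite set $\A^{\ell+1}$, together with the identification of $S(G)$ with the bi-infinite words whose length-$(\ell+1)$ factors lie in $G$, is precisely the intended reasoning.
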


\begin{example}\label{ex_graph}
	Continue with the setting from Example~\ref{ex_1}. Consider the only one $(3,2)$-generating set
	$G = \{\F\J\J\F, \F\J\J\J, \J\F\J\J, \J\J\F\J, \J\J\J\F, \J\J\J\J\}$. 
	Since $S(G)$ is of finite type, it follows from theory of symbolic dynamics that there exists a strongly connected labeled graph $H$ such that
	$S(G)$ coincide with labels of all bi-infinite paths in
	$H$ (for details, see~\cite{Lind1995}).
	This graph also determines a finite automaton recognizing the set $\LL(S(G))$, i.e., the set of labels of finite paths in $H$.
	Such automaton can be created from a de-Bruijn graph.
	However, it would not be minimal
	as it is shown in Figure~\ref{fig_deBruijn}.
\end{example}

\begin{figure}[t]\centering
\begin{subfigure}[b]{0.4\textwidth}
\centering
\begin{tikzpicture}[
	->,
	>=stealth',
	shorten >=1pt,
	auto,
	node distance=2.5cm,
	main node/.style={fill=blue!20,draw}
]

  \node[main node] (6)  {\J\J\J\J};
  \node[main node] (2) [below left of=6] {\F\J\J\J};
  \node[main node] (5) [below right of=6] {\J\J\J\F};
  \node[main node] (3) [below of=2] {\J\F\J\J};
  \node[main node] (4) [below of=5] {\J\J\F\J};
  \node[main node] (1) [below right of=3] {\F\J\J\F};

  \path[every node/.style={font=\sffamily\small}]
    (1) edge node [] {\J} (4)
    (2) edge node [] {\F} (5)
    (2) edge node [] {\J} (6)
    (3) edge node [] {\J} (2)
    (3) edge node [] {\F} (1)
    (4) edge node [] {\J} (3)
    (5) edge node [] {\J} (4)
    (6) edge node [] {\F} (5)
    (6) edge [loop above] node {\J} (6);
\end{tikzpicture}
\caption{A graph created as a de-Bruijn graph from the set of vertices $G$.}
\label{subfig_deBruijn}
\end{subfigure}
\qquad
\begin{subfigure}[b]{0.4\textwidth}
\centering
\begin{tikzpicture}[
	->,
	>=stealth',
	shorten >=1pt,
	auto,
	node distance=2.5cm,
	main node/.style={fill=blue!20,draw}
]

  \node[main node] (1)  {1};
  \node[main node] (2) [below right of=1] {2};
  \node[main node] (3) [below left of=1] {3};

  \path[every node/.style={font=\sffamily\small}]
    (1) edge node [] {\F} (2)
    (2) edge node [] {\J} (3)
    (3) edge node [] {\J} (1)
    (1) edge [loop above] node {\J} (1);
\end{tikzpicture}
\caption{The previous graph after minimization.}
\label{subfig_minimized}
\end{subfigure}

\caption{
Labeled graphs $H$ for the subshift $S(G)$ in Example~\ref{ex_graph}.}
		\label{fig_deBruijn}
	\end{figure}

\begin{theorem}\label{thm_S_k(l)_regular}
	Let $k$	and $\ell$ be positive integers.
	The set $\seedset{\ell}{k}$ is a regular language.
\end{theorem}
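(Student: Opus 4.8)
The plan is to read the statement off Lemma~\ref{lem_(l,k)-valid_cor_solving_seeds} together with the decomposition of $\lkvalid{\ell}{k}$ into subshifts of finite type set up above. By Lemma~\ref{lem_(l,k)-valid_cor_solving_seeds}, a seed $Q$ lies in $\seedset{\ell}{k}$ exactly when it is a factor of some $(\ell,k)$-valid bi-infinite word, so $\seedset{\ell}{k}=\LL\bigl(\lkvalid{\ell}{k}\bigr)$ and it suffices to show that this language is regular.

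First I would record the identity $\lkvalid{\ell}{k}=\bigcup_G S(G)$, the union ranging over all $(\ell,k)$-generating sets $G$. The inclusion ``$\supseteq$'' is Corollary~\ref{cor_comp}: any $\uu\in S(G)$ has $\LL_{\ell+1}(\uu)\subseteq G$, all words of $G$ are pairwise $\compatible{\ell}{k}$, hence $\uu$ is $(\ell,k)$-valid. The inclusion ``$\subseteq$'' is Observation~\ref{obs_generating_set_existence}. Passing to languages,
\[
	\seedset{\ell}{k}
	=\LL\bigl(\lkvalid{\ell}{k}\bigr)
	=\bigcup_G \LL\bigl(S(G)\bigr).
\]

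Since each $(\ell,k)$-generating set is a subset of the finite set $\A^{\ell+1}$, there are only finitely many of them, so the union above is finite; as a finite union of regular languages is regular, it remains only to check that each $\LL(S(G))$ is regular. Here $S(G)$ is a subshift of finite type and $\LL(S(G))$ has the explicit local description: a word of length $\geq\ell+1$ lies in it iff each of its length-$(\ell+1)$ factors belongs to $G$, and a word of length $<\ell+1$ lies in it iff it is a factor of some element of $G$ (for the latter one uses that $\J^{\ell+1}\in G$, so by the observation following Definition~\ref{def_gen_set_nemax} every $v\in G$ extends to $\cdots\J\J|v\J\J\cdots\in S(G)$). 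Both conditions are plainly regular --- the first is a sliding-window condition recognised by a de~Bruijn-type automaton on the vertex set $G$, as illustrated in Example~\ref{ex_graph}; see also \cite{Lind1995} for the general fact that languages of subshifts of finite type are regular.

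The argument is essentially an assembly of facts already proved, so I expect no serious obstacle; the only point needing care is the treatment of seeds shorter than $\ell+1$ in the description of $\LL(S(G))$ and the verification that the automaton obtained from the graph of $S(G)$ accepts precisely $\LL(S(G))$. As an alternative one can avoid generating sets altogether: Lemma~\ref{lem_lkvalid=shift_space} exhibits $\lkvalid{\ell}{k}$ as $S_X$ where $X$ is a finite union, over the finitely many $\compatible{\ell}{k}$-incompatible $k$-tuples from $\A^{\ell+1}$, of languages of the form $\bigcap_i \A^{*}v^{(i)}\A^{*}$; thus $X$ is regular, $\lkvalid{\ell}{k}$ is sofic, and $\LL\bigl(\lkvalid{\ell}{k}\bigr)$ is regular by the standard theory of sofic subshifts.
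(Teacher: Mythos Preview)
Your proposal is correct and follows essentially the same route as the paper: identify $\seedset{\ell}{k}=\LL(\lkvalid{\ell}{k})$ via Lemma~\ref{lem_(l,k)-valid_cor_solving_seeds}, decompose $\lkvalid{\ell}{k}$ as the finite union $\bigcup_G S(G)$ over all $(\ell,k)$-generating sets using Observation~\ref{obs_generating_set_existence}, and conclude since each $\LL(S(G))$ is the regular language of a subshift of finite type. Your write-up is in fact more careful than the paper's (you justify both inclusions in $\lkvalid{\ell}{k}=\bigcup_G S(G)$ and spell out the short-seed case), and your alternative via the regularity of the forbidden set $X$ in Lemma~\ref{lem_lkvalid=shift_space} is a valid shortcut that bypasses generating sets entirely.
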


\begin{proof}
	There can be only finite number of $(\ell,k)$-generating sets; denote them $G_1, \ldots, G_d$.
	It follows from Observation~\ref{obs_generating_set_existence}
	that
	$S(G_1)\cup \ldots \cup S(G_d)  = \lkvalid{\ell}{k}$
	and, from Lemma~\ref{lem_(l,k)-valid_cor_solving_seeds}, we know that
	$\LL(\lkvalid{\ell}{k}) = \seedset{\ell}{k}$.	
	
	For every $i\in\{1,\ldots,d\}$, the set $S(G_i)$ is a subshift of finite type, so every set $\LL(S(G_i))$ is a regular language.
	Since the set $\seedset{\ell}{k}$ is a union of finitely many regular languages, it is a regular language.	
\end{proof}

\section{Application for seed design}\label{sec_application}

In this section, we describe how to design seeds with knowledge of
an $(\ell,k)$-generating set. Then we show how to search $(\ell,1)$ and $(\ell,2)$-generating sets.

\subsection{Seed design using generating sets}

Let us have an $(\ell,k)$-generating set~$G$
and let us consider a task of designing a seed~$Q$
of length~$s$, which would solve the $(\ell+s,k)$-problem.
If $s\leq\ell+1$, we can take an arbitrary 
factor of length $s$ of any word from $G$.

If $s>\ell+1$, we need to construct the seed in $s - \ell$~steps by extending letter by letter.
In the first step, we take an arbitrary word~$w\in G$
and set~$Q:=w$. In every other step, we take any
word~$w$ from~$G$ such that the last~$\ell$ letters of~$Q$ are
equal to $\ell$ first letters of~$w$
and concatenate the last letter of~$w$ to~$Q$. Existence of such word~$w$ is guaranteed since we can use at least the letter $\J$ in every step.

\subsection{Generating sets for $k=1$}

As a simple consequence of Corollary~\ref{cor_basic_thm}, we get a full characterization of all seeds solving $(m,1)$-problems (\cite[Theorem~5]{diplomka}).
\begin{proposition}\label{prop_(m,1)}
	$
		\seedset{\ell}{1}
		= \{Q \in \A^*\ |\ \F^{\ell+1} \text{ is not a factor of } Q \}
	$
\end{proposition}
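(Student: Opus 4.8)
The plan is to derive this directly from Corollary~\ref{cor_basic_thm} specialized to $k=1$. With a single error, an error combination is a singleton $\{i_1\}\subseteq\{0,\ldots,m-1\}$, and $\seedOR$ applied to one bi-infinite word is just that word itself. So the non-detection condition of Corollary~\ref{cor_basic_thm} for a seed $Q$ with margin $\ell$ reads: $Q$ fails to detect $\{i_1\}$ at every position $t\in\{0,\ldots,\ell\}$ if and only if $\bigl(\shift^{i_1}(\ww)\bigr)[0,\ell]=\F^{\ell+1}$, where $\ww=\cdots\J\J|\J^\ell Q\J\J\cdots$. Since $\bigl(\shift^{i_1}(\ww)\bigr)[0,\ell]=\ww[i_1,i_1+\ell]$, this says precisely that $\ww$ has the factor $\F^{\ell+1}$ starting at position $i_1$.

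First I would observe that the only place an $\F$ can occur in $\ww$ is inside the block $\J^\ell Q$ — in fact inside $Q$ itself, which occupies positions $\ell,\ell+1,\ldots,\ell+|Q|-1$ (the $\J^\ell$ prefix and the surrounding $\J\J\cdots$ contribute only joker symbols). Hence $\ww$ contains the factor $\F^{\ell+1}$ at some position if and only if $Q$ contains the factor $\F^{\ell+1}$. Combining: there exists an error combination $\{i_1\}$ with $i_1\in\{0,\ldots,m-1\}$ that $Q$ fails to detect at every position if and only if $\F^{\ell+1}$ is a factor of $Q$. (One should check the index range: if $\F^{\ell+1}$ occurs in $Q$, the corresponding $i_1$ lies in $\{\ell,\ldots,\ell+|Q|-1\}\subseteq\{0,\ldots,m-1\}$ since $m=|Q|+\ell$, so it is a legitimate error position; conversely any offending $i_1$ forces an $\F^{\ell+1}$ inside $Q$.) Therefore $Q$ solves the $(|Q|+\ell,1)$-problem if and only if $\F^{\ell+1}$ is not a factor of $Q$, which by Definition~\ref{def_S_k(ell)} is exactly the claimed description of $\seedset{\ell}{1}$.

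I do not anticipate a genuine obstacle here; the one point requiring a little care is the bookkeeping of positions — making sure that ``$\F^{\ell+1}$ occurs in $\ww$'' is equivalent to ``$\F^{\ell+1}$ occurs in $Q$'' (using that all symbols of $\ww$ outside the image of $Q$ are jokers) and that the occurrence indices fall within the admissible range $\{0,\ldots,m-1\}$ of error positions. Alternatively, one can phrase the whole argument through Lemma~\ref{lem_(l,k)-valid_cor_solving_seeds}: for $k=1$ a bi-infinite word $\uu$ is $(\ell,1)$-valid iff $\sh_1(\uu)\le\ell$ iff $\uu$ has no factor $\F^{\ell+1}$, and a finite word is a factor of such a $\uu$ iff it itself avoids $\F^{\ell+1}$ — giving the same conclusion.
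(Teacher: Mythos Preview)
Your proposal is correct and follows essentially the same route as the paper's own proof: specialize Corollary~\ref{cor_basic_thm} to $k=1$, note that $\seedOR$ of a single word is the word itself, and observe that $(\shift^{i}(\ww))[0,\ell]=\F^{\ell+1}$ for some admissible $i$ is equivalent to $\F^{\ell+1}$ being a factor of $Q$. Your write-up is in fact more careful than the paper's (you explicitly handle the index-range issue and the fact that all $\F$'s of $\ww$ lie inside $Q$), and the alternative via Lemma~\ref{lem_(l,k)-valid_cor_solving_seeds} you sketch at the end also works.
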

\begin{proof}
	Denote $\vv = \cdots\J\J|\J^\ell Q\J\J\cdots$ and $\ell=m-|Q|$. Then from Corollary~\ref{cor_basic_thm}
	follows that:
	$Q$ solves the $(m,1)$-problem $\iff$
	$\forall i\in\Z \left(
		(\shift^{i}(\vv))[0,\ell]\not= \F^{\ell+1}
		\right)$
	$\iff$
	$Q$ does not contain $\F^{\ell+1}$.
\end{proof}

Thus, for every positive $\ell$, the only
$(\ell,1)$-generating set
is $\A^{\ell+1} \setminus \{\F^{\ell+1}\}$, i.e., the set of all words of length $\ell+1$ except $\F^{\ell+1}$.

\subsection{Generating sets for $k=2$}

Let $k=2$ and $\ell$ be an arbitrary fixed positive integer.
We can derive all $(\ell,2)$-generating sets using graph theoretical
methods by transformation to independent sets search.
Let $V := \{w^{(1)},\ldots,w^{(q)}\}$ denote the set 
of all seeds of length $\ell+1$ solving the 
$(2\ell+1,2)$-problem. Consider a graph $R$ given by 
the adjacency matrix
$
	(M_R)_{i,j} =
	\begin{cases}
		0 & \text{if} \enspace \compatible{\ell}{2}(w^{(i)},w^{(j)}), \\
		1 & \text{otherwise}.
	\end{cases}
$

Then the generating sets are ``maximal'' independent sets (maximal with 
respect to inclusion) in the graph~$R$. We require maximality here since it is already required
by the second property in Definition~\ref{def_gen_set_nemax}.

We can partially simplify the graph~$R$. We say that two vertices $v$ and $w$ in this graph are equivalent if
$
	\forall x\in V \bigl(
		\compatible{\ell}{k}(x,v) \iff
		\compatible{\ell}{k}(x,w)
	\bigr).
$
Then we can put all equivalent vertices into one vertex, i.e., every vertex will contain a set of words instead of only one word. The step with searching ``maximal'' independent sets stays unchanged.

\begin{example}
Let $k=2$. For every $\ell\in\{1,\ldots,4\}$,
all seeds solving the $(\ell+1,2)$-problem are mutually compatible, which means that there exists a unique $(\ell,2)$-generating set. We list them out in the following table.
\[
\begin{array}{l|l}
	\ell & G \\\hline
	1 & \{\J\J\}\\
	2 & \{\F\J\J, \J\F\J, \J\J\F, \J\J\J\}\\
	3 & \{\F\J\J\F,\F\J\J\J, \J\F\J\J, \J\J\F\J, \J\J\J\F, \J\J\J\}\\
	4 & \{
\F\F\J\J\J,
\J\F\F\J\J,
\J\J\F\F\J,
\J\J\J\F\F,
\F\J\F\J\J,
\J\F\J\F\J,
\J\J\F\J\F,
\F\J\J\F\J,
\J\F\J\J\F,
\F\J\J\J\F,\\
&\qquad \F\J\J\J\J,
\J\F\J\J\J,
\J\J\F\J\J,
\J\J\J\F\J,
\J\J\J\J\F,
\J\J\J\J\J
	\}
\end{array}
\]
\end{example}

\begin{example}\label{ex_k2_l5}
Let $k=2$ and $\ell=5$. We find the graph $R$ by the procedure above. After its simplification, we obtain the graph in Figure~\ref{fig_max_independent}, where
\begin{align*}
P_0 &= \{
	\J\J\J\J\J\J\,;\
	\J\J\J\J\J\F\,,\  
	\J\J\J\J\F\J\,,\ 
	\J\J\J\F\J\J\,,\ 
	\J\J\F\J\J\J\,,\ 
	\J\F\J\J\J\J\,,\ 
	\F\J\J\J\J\J\,;\ \\
 &\qquad
	\J\J\J\J\F\F\,,\ 
	\J\J\J\F\F\J\,,\ 
	\J\J\F\F\J\J\,,\ 
	\J\F\F\J\J\J\,,\ 
	\F\F\J\J\J\J\,;\\
 &\qquad
	\J\J\J\F\J\F\,,\ 
	\J\J\F\J\F\J\,,\ 
	\J\F\J\F\J\J\,,\ 
	\F\J\F\J\J\J\,;\\
 &\qquad
	\J\J\F\J\J\F\,,\ 
	\J\F\J\J\F\J\,,\ 
	\F\J\J\F\J\J\,;\ 
 %\\ &\qquad
	\J\F\J\J\J\F\,,\ 
	\F\J\J\J\F\J\,;\ 
	\\
 &\qquad
	\F\J\J\J\F\F\,;\ 
	\F\F\J\J\J\F\,;\ 
	\F\J\J\J\J\F
 \,\},\\
 P_1 &= \{\F\J\J\F\J\F
 \,\}, \qquad
 P_2 = \{
 	\J\J\F\F\J\F\,,\ 
 	\J\F\F\J\F\J\,,\ 
 	\F\F\J\F\J\J
 \,\},\\
 P_3 &= \{
	\J\F\F\J\J\F\,,\
	\F\F\J\J\F\J
 \,\}, \qquad
 P_4 = \{
 	\J\F\J\J\F\F\,,\ 
 	\F\J\J\F\F\J
 \,\},\\
 P_5 &= \{
 	\J\J\F\J\F\F\,,\
 	\J\F\J\F\F\J\,,\
 	\F\J\F\F\J\J
 \,\}, \qquad P_6 = \{
 	\F\J\F\J\J\F
 \,\}.
\end{align*}

\begin{figure}[t]\centering
\begin{tikzpicture}[
	->,
	>=stealth',
	shorten >=1pt,
	auto,
	node distance=2cm,
	main node/.style={fill=blue!20,draw}
]

  \node[main node] (1)  {$P_1$};
  \node[main node] (2) [right of=1] {$P_2$};
  \node[main node] (3) [below of=2] {$P_3$};
  \node[main node] (4) [right of=2] {$P_5$};
  \node[main node] (5) [below of=4] {$P_4$};
  \node[main node] (6) [right of=4] {$P_6$};
  \node[main node] (0) [left of=3] {$P_0$};

  \path[every node/.style={font=\sffamily\small}]
    (1) edge [-] node [] {} (2)
    (2) edge [-] node [] {} (3)
    (2) edge [-] node [] {} (4)
    (3) edge [-] node [] {} (5)
    (4) edge [-] node [] {} (5)
    (4) edge [-] node [] {} (6);
\end{tikzpicture} 
\caption{The simplified graph of sets of equivalent seeds for $(5,2)$-generating sets search in Example~\ref{ex_k2_l5}.}
\label{fig_max_independent}
\end{figure}

%We can observe that $P_1$, $P_2$, and $P_3$ can be obtained by mirroring from $P_6$, $P_5$, and $P_4$, respectively.
By finding ``maximal'' independent sets in the graph in Figure~\ref{fig_max_independent}, we get all $(5,2)$-generating sets:
\begin{align*}
	G_1 = P_0 \cup P_1 \cup P_3 \cup P_5, 
	&&
	G_2 = P_0 \cup P_1 \cup P_3 \cup P_6, \\
	G_3 = P_0 \cup P_2 \cup P_4 \cup P_6, 
	&&
	G_4 = P_0 \cup P_1 \cup P_4 \cup P_6. 
\end{align*}

\end{example}

To conclude the section, let us remark that a similar
derivation can be done using hypergraphs also for~$k>2$.

%%%%%%%%%%%%%%%%%%%%%%%%%%%%%%%%%%%%%%%%%%%%%%%%%%%%%%%%%%%%%%%%%%%%%%%%%%%%%%%%%%%
%%%%%%%%%%%%%%%%%%%%%%%%%%%%%%%%%%%%%%%%%%%%%%%%%%%%%%%%%%%%%%%%%%%%%%%%%%%%%%%%%%%
%%%%%%%%%%%%%%%%%%%%%%%%%%%%%%%%%%%%%%%%%%%%%%%%%%%%%%%%%%%%%%%%%%%%%%%%%%%%%%%%%%%

\section{Conclusion}

In this paper, we have studied lossless seeds from the perspective
of symbolic dynamics.
We have concentrated on the seed margin~$\ell$ defined as a difference of the length~$m$ of compared strings and the length of a seed.
We have derived asymptotic behavior of~$\ell$ for optimal seeds (Proposition~\ref{prop_asymptotic}), which must satisfy
$\ell \in \Theta(m^{\frac{k}{k+1}}) = \Theta(m-w(m))$.
We have shown another criterion for errors detection by seeds (Theorem~\ref{thm_detection}).
From this criterion we have proved that lossless seeds coincide with languages of certain sofic subshifts, therefore, they are recognized by finite automata (Theorem~\ref{thm_S_k(l)_regular}).
We have presented that these subshifts are fully
given by the number of allowed errors~$k$ and
the seed margin~$\ell$ and that they 
can be further decomposed into subshifts of finite type.

These facts explain why periodically repeated patterns often appear in lossless seeds. 
This is caused by the fact that
these patterns correspond to cycles in recognizing automata (which 
correspond to seeds for cyclic $(m,k)$-problems
in~\cite{multiseed}).
Nevertheless, it remains unclear what is the upper bound on the length of cycles to obtain at least some optimal seeds.
In the case case $k=2$, it was conjectured
in~\cite[Conjecture~1]{diplomka}
that
it is sufficient to consider patterns having length at most $\ell+1$ to obtain some of optimal seeds.

\medskip

\textbf{Acknowledgements.}
The author is supported by the ABS4NGS grant of the French government (program \emph{Investissement d'Avenir}). He is grateful to Gregory Kucherov and Karel Klouda for helpful ideas.
He also thanks three anonymous referees for valuable comments.

%\section{Bibliography}

\nocite{*}
\bibliographystyle{eptcs}
\bibliography{languages_of_lossless_seeds}
\end{document}